\documentclass[american]{article}

\usepackage[T1]{fontenc}
\usepackage[utf8]{inputenc}
\usepackage{babel}
\usepackage{csquotes}

\usepackage{amsmath,amssymb,amsthm}
\usepackage{float}
\usepackage{geometry}
\usepackage{setspace}
\usepackage{slashed}
\usepackage{dsfont}
\usepackage{algpseudocode}
\usepackage{tikz}
\usetikzlibrary{backgrounds}

\floatstyle{ruled}
\newfloat{algorithm}{tbp}{loa}
\floatname{algorithm}{Algorithm}

\theoremstyle{definition}
\newtheorem{example}{Example}
\newtheorem{defn}{Definition}

\theoremstyle{plain}
\newtheorem{prop}{Proposition}
\newtheorem{thm}{Theorem}
\newtheorem*{thm*}{Theorem}
\newtheorem{lem}{Lemma}

\theoremstyle{remark}
\newtheorem{rem}{Remark}

\usepackage[style=apa, backend=biber]{biblatex}
\usepackage[
  bookmarks=false,
  breaklinks=false,
  pdfborder={0 0 1},
  colorlinks=true,
  linkcolor=blue,
  citecolor=blue,
  urlcolor=blue
]{hyperref}

\begin{document}
\title{Random Matching with Minimums}
\author{Will Sandholtz\thanks{UC Berkeley. Contact: \protect\href{mailto:willsandholtz@berkeley.edu}{willsandholtz@berkeley.edu}.}
  \and Andrew Tai\thanks{US Dept. of War. Contact: \protect\href{mailto:andrew.a.tai@gmail.com}{andrew.a.tai@gmail.com}.
    Disclaimer: The views expressed are those of the author and do not
    reflect the official policy or position of the Department of War or
    the U.S. Government.}}
\date{May 2026}
\maketitle
\begin{abstract}
  We study stochastic object assignment problems in which objects may
  have minimum and maximum requirements, such as with classes with upper
  and lower enrollment bounds. We construct a new random assignment
  mechanism, the minimums probabilistic serial (MPS) mechanism, which
  generalizes the Probabilistic Serial mechanism of \textcite{BM01}. The
  random allocation produced by MPS is guaranteed to be Pareto efficient;
  that is, there is no other implementable allocation that all agents
  prefer via first order stochastic dominance. We also show that MPS
  is i) envy-free, in that no agent will strictly prefer another agent's
  assignment, and ii) weak strategyproof, in that agents cannot achieve
  a better assignment by misreporting their preferences.
\end{abstract}

\section{Introduction}

In the Book of Numbers, Moses casts lots to divide territory among
the tribes of Israel -- thus object allocation via lottery goes at
least as far back as the Hebrew Bible. In modern settings, public
housing, dorm allocation, public school allocation, jury selection,
and even immigration are examples of ``object'' allocation via randomization.
When the objects vary significantly but monetary transfers are infeasible,
randomization can impose ex ante fairness.

Going further, social choice mechanisms attempt to incorporate the
preferences of the agents to achieve efficiency. The simplest example,
random serial dictatorship (RSD) randomizes the order in which agents
select their favored objects. \textcite{BM01} point out the efficiency
shortcomings of RSD (which we will recount later) and propose the
probabilistic serial mechanism (PS). In PS, agents ``eat'' probability
of their favored object until it is exhausted, yielding a lottery
over the objects. Once agents have lotteries over objects, the Birkhoff-von
Neumann theorem guarantees that it can be translated to a lottery
over deterministic allocations.

Often the objects assigned are bound by minimum constraints in addition
to capacities. For example, consider assigning projects to workers;
a project may require a minimum number of workers to be successful.
Another example is the school club system in Japanese secondary schools;
clubs are often mandatory and incorporated as class credit. A club
may have a maximum capacity of students, and it may also have a minimum
constraint. For example, sports teams are often organized as clubs
-- the basketball team requires at least 5 students.

This paper considers randomized object allocation with minimums. While
random serial dictatorship\footnote{With proper modifications to ensure respecting the constraints; see
  \textcite{MT13}.} suffices to select an ex post efficient outcome, it inherits the
same problems from \textcite{BM01}'s setting; the lotteries are not
efficient. We therefore generalize the probabilistic serial mechanism
to accommodate minimums and maximums while maintaining efficiency
and fairness.

Towards a formal model, let the set of agents be $N$, each of whom
demand $d$ distinct objects in $O$. A deterministic allocation is
denoted $M=[M_{ij}]_{i\in N,j\in O}$, which is a matrix of 0s and
1s.\footnote{Note it is not bistochastic.} A given object $o_{j}$
has a maximum capacity $c_{j}$ and a minimum requirement $m_{j}$.
The final allocation requires $m_{j}\leq\sum_{i}M_{ij}\leq c_{j}$.
We seek a matrix $\mu=[\mu_{ij}]$ where row $i$ represents a lottery
over objects for agent $i$. The advantage of working explicitly with
lotteries is that they can be more efficient (in a precise sense discussed
in the next section). Of course, it is critical and nontrivial to
ensure that $\mu$ can indeed be implemented via a lottery over feasible
deterministic allocations.

Thus we have two tasks. The main task is to produce an efficient random
allocation $\mu$. To do so, we characterize the set of implementable
random allocations and paths to them from the origin. We express this
geometrically as a polytope represented by linear inequalities $\{\mu:A\mu\leq b\}$.
Then we apply results from \textcite{Balbuzanov22} to construct our
generalization of the probabilistic serial mechanism, which we call
\emph{Minimums Probabilistic Serial (MPS)}. Once we have the characterization
$\{\mu:A\mu\leq b\}$, we apply \textcite{Balbuzanov22}. MPS is guaranteed
to produce a Pareto efficient random allocation, in that there is
no other implementable random allocation that all agents prefer via
first order stochastic dominance. We further show that MPS is envy
free, meaning no agent will strictly prefer another agent's allocation.
In the special case of unit demand where $d=1$, MPS is weak strategyproof,
meaning a misreport cannot result in a stochastically dominant allocation.
While efficiency is inherited as a special case from \textcite{Balbuzanov22},
the other properties are not.

The second task is to implement $\mu$; that is, to write $\mu$ as
a convex combination of allowable deterministic allocations. Since
minimum and capacity constraints are bihierarchical in the sense of
\textcite{BCKM13}, we can directly apply their implementation results.

\textcite{BM01} introduce the probabilistic serial (PS) mechanism, the
first eating algorithm for random allocations. They show it is efficient
and weak strategyproof for the special case of 1:1 matching problems.
\textcite{BCKM13} (henceforth BCKM13) deal with implementation in many-to-many
random matching problems with bihierarchical constraints. They also
present a generalization of PS, termed \emph{Generalized Probabilistic
  Serial} (GPS), when the bihierarchical constraints are only maximums.
However, minimum quotas on objects cannot be represented with bihierarchical
constraints using only maximum capacities. With three or more objects,
at least two of which have nonzero minimums, the bihierarchical constraint
structure cannot be maintained. Appendix \ref{sec:BCKM} contains
details and an example.

\textcite{Balbuzanov22} (henceforth B22) describes the \emph{Generalized
  Constrained Probabilistic Serial (GCPS)} mechanism, which generalizes
the setting to allow arbitrary constraints. His primitive is an enumeration
of allowable deterministic allocations. With this list, the \emph{facet
  enumeration problem (FEP)} is applied to characterize the polytope
of implementable random allocations and paths to them. This characterization
is then fed into Generalized Constrained Probabilistic Serial (GCPS).
However, it is nontrivial to convert even simple constraints to an
enumeration of deterministic allocations; this is a factorial complexity
problem, as Example \ref{exa:factorial} notes.
\begin{example}
  \label{exa:factorial} Let there be $n$ agents in $N$ and $k$ objects
  in $O$, with $n<k$. Let $m_{j}=0$ and $c_{j}=1$ for all $j\in O$.
  Then there are $\frac{k!}{(k-n)!}=n!\binom{k}{n}$ different allowable
  deterministic allocations.
\end{example}
Further, FEP procedures have worst-case exponential complexity in
the inputs and outputs (which are proportional to the factorial number
of deterministic allocations). Therefore the procedure in B22 quickly
becomes intractable even for problems of modest size.

To address the gaps in the literature, we take minimum and capacity
constraints as our primitives and explicitly describe our mechanism,
MPS. In the unit demand case, we also provide a polynomial-time algorithm.
By providing the first feasible algorithm for random matching with
minimums, we hope that our results help to connect the literature
and implementation in real applications.

In addition to BCKM13 and B22, many have worked on random allocation.
\textcite{KS06} extend Probabilistic Serial to indifferences; \textcite{Yilmaz09}
also considers endowments. \textcite{Kojima09} deals with multi-unit
demand. \textcite{Heo14} and \textcite{HHKKU14} offer characterizations
in this setting. Papers including \textcite{CK10}, \textcite{KM10}, and
\textcite{LP16} deal with probabilistic allocations in large markets.
A number of papers deal with constraints such as minimums in matching.
A non-exhaustive list includes \textcite{BFIM10}, \textcite{HIM11}, and
\textcite{EHYY14}.

\section{Model}

\label{sec:Model}

Let $N=\{1,\ldots,N\}$ be the set of agents and $O=\{o_{1},\ldots,o_{O}\}$
be the set of objects. Each agent demands exactly $d\in\mathbb{N}$
distinct objects. We use $j\in O$ to refer a generic object. A \textbf{deterministic
  allocation} is a matrix $M\in\{0,1\}^{N\times O}$, where $M_{ij}=1$
means that agent $i$ is assigned to object $j$. We require:
\begin{align*}
  \sum_{j}M_{ij} & =d & \forall i\in N &  & \tag{Demand}
\end{align*}
That is, each agent is assigned to exactly $d\in\mathbb{N}$ objects,
and at most one of each. Each object $j$ has two characteristics:
a maximum capacity of agents, $c_{j}\in\mathbb{N}$, and a minimum
required number of agents, $m_{j}\in\mathbb{Z}_{+}$. In other words,
\begin{align*}
  \sum_{i}M_{ij} & \leq c_{j} & \forall j\in O &  & \tag{Cap} \\
  \sum_{i}M_{ij} & \geq m_{j} & \forall j\in O &  & \tag{Min}
\end{align*}
As shorthand, we refer to $O_{m}:=\{j:m_{j}>0\}$ as ``minimum objects.''
A deterministic allocation $M$ is \textbf{allowable} if it satisfies
the demand, capacity, and minimum conditions shown above. Given a
problem, the set of allowable deterministic allocations is $D(N,O)$,
or simply $D$. When we refer to the set of deterministic allocations,
we mean $D$. Without loss of generality, we impose that $c_{j}\leq N$
for all $j$. We also assume the problem is feasible; i.e. $\sum_{j}c_{j}\geq Nd$
and $\sum_{j}m_{j}\leq N$.

A matrix $\mu\in[0,1]^{N\times O}$ is a \textbf{random allocation},
where $\mu_{ij}$ represents agent $i$'s probability of being assigned
to object $j$. Row $\mu_{i}$ represents agent $i$'s lottery over
the objects. A random allocation is \textbf{implementable} if it is
a convex combination of allowable deterministic allocations. That
is, let $\Delta D:=\text{conv}(D)$; $\mu$ is implementable if $\mu\in\Delta D$.

A feature of our model is that many constraints on allocations can
be written in terms of marginal sums over the agents. As shorthand,
we denote $M_{j}=\sum_{i}M_{ij}$ and $\mu_{j}=\sum_{i}\mu_{ij}$.
Note that we can rewrite our condition for capacity constraints as
$M_{j}\leq c_{j}$ and our condition for minimum constraints as $M_{j}\geq m_{j}$.
We will make use of this notation throughout the paper.

It is straight forward but nontrivial that implementable random allocations
are characterized by the same inequalities.
\begin{prop}
  \label{prop:deltaD}Given a market $(N,O,\succ)$, the set $\Delta D$
  of implementable random allocations is characterized by the following
  system :
  \begin{align*}
    \sum_{j}\mu_{ij} & =d         &  & \forall i\in N     \\
    \mu_{j}          & \leq c_{j} &  & \forall j\in O     \\
    \mu_{j}          & \geq m_{j} &  & \forall j\in O_{m}
  \end{align*}
  where $\mu\in[0,1]^{N\times O}$.
\end{prop}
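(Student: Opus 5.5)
The plan is to show two inclusions. Write $P$ for the polytope cut out by the displayed system together with $0\le\mu_{ij}\le 1$. The inclusion $\Delta D\subseteq P$ is immediate. Every allowable $M\in D$ satisfies (Demand), (Cap) and (Min), and these are linear constraints. So every convex combination of allowable matrices satisfies them too, and entries stay in $[0,1]$.

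The substance is the reverse inclusion $P\subseteq\Delta D$. I would show that $P$ is an integral polytope. Then every vertex of $P$ is a $0$--$1$ matrix satisfying the constraints, hence lies in $D$. Since $P$ is bounded, every $\mu\in P$ is a convex combination of its vertices, which gives $\mu\in\Delta D$.

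For integrality I would use total unimodularity through the bipartite structure, in the spirit of BCKM13's bihierarchy argument. The constraint matrix has three kinds of rows:
\begin{itemize}
\item the row sums $\sum_j\mu_{ij}$, which form one family of disjoint sets (rows of the matrix);
\item the column sums $\mu_j$, which appear both with upper bound $c_j$ and with lower bound $m_j$ and form a second family of disjoint sets (columns);
\item the singletons $\mu_{ij}$, with bounds $0$ and $1$.
\end{itemize}
Each variable $\mu_{ij}$ appears in exactly one row-sum constraint and one column-sum constraint. So the matrix stacking the row-sum and column-sum incidence vectors is the incidence matrix of the complete bipartite graph $K_{N,O}$, and that matrix is totally unimodular. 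Appending identity rows (the box constraints), and duplicating rows with sign changes to encode two-sided bounds, preserves total unimodularity. All right-hand sides $d,c_j,m_j,0,1$ are integers. Hoffman--Kruskal then gives that every vertex of $P$ is integral.

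Alternatively, and more in the spirit of the paper, I could cite BCKM13's theorem directly. The constraint sets are the singletons together with the agent-rows (one hierarchy) and the object-columns (the other hierarchy), with integer lower and upper quotas. Both families are hierarchies, so the structure is bihierarchical, and their implementation theorem states that any $\mu$ meeting the quotas is a convex combination of integral matrices meeting the same quotas. Such an integral matrix with entries in $[0,1]$ and row sums $d$ is exactly an allowable $M$.

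The main obstacle is just making sure the minimum constraint, a lower quota on a column, is genuinely admissible. The TU or bihierarchy theorem handles two-sided integer quotas on each constraint set, so there is no problem. I would also note that nonemptiness of $P$ is not needed for the equality $P=\Delta D$: if $D$ is empty, then $P$ has no integral vertices and hence is empty as well.
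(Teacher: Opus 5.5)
Your proposal is correct and takes essentially the paper's route. The paper likewise gets one inclusion from convexity, then obtains the reverse inclusion by noting that the constraint matrix is totally unimodular as the incidence matrix of a bihierarchy (citing Edmonds). Extreme points are therefore integral and lie in $D$. Your identification of the row- and column-sum matrix with the incidence matrix of $K_{N,O}$ is a concrete instance of that fact, and it makes explicit the appended identity rows and the negated rows for two-sided bounds, which the paper leaves implicit.
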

\begin{proof}
  Appendix \ref{sec:prelimproofs}.
\end{proof}
Each agent $i\in N$ has a strict preference ranking $\succ_{i}$
over the objects. A \textbf{mechanism} is $f:(N,O,\succ)\rightarrow\Delta D$.
We typically suppress the first inputs and write $f(\succ)$. In words,
a mechanism takes in the model primitives and agent preferences over
objects and reports an implementable random allocation.

A random allocation is \textbf{ex post efficient} if it is a convex
combination of Pareto efficient deterministic allocations. This is
not difficult to achieve -- random serial dictatorship (sometimes
called random priority) yields ex post efficient allocations. Using
first order stochastic dominance (FOSD), we also endow agents with
a partial preference ordering $\geq_{i}^{FOSD}$ over random allocations.
Without loss of generality, label the objects such that $o_{1}\succ_{i}o_{2}\succ_{i}\cdots\succ_{i}o_{O}$.
Given random allocations $\mu$ and $\nu$, denote $\mu_{i}\geq_{i}^{FOSD}\nu_{i}$
if
\[
  \sum_{j=1}^{k}\mu_{ij}\geq\sum_{j=1}^{k}\nu_{ij}\qquad\forall k=1,\ldots,O.
\]
Additionally, $\mu_{i}>_{i}^{FOSD}\nu_{i}$ if $\mu_{i}\geq_{i}^{FOSD}\nu_{i}$
and $\mu_{i}\neq\nu_{i}$. A random allocation $\mu$ is \textbf{SD
  efficient} if there does not exist $\nu$ such that $\nu_{i}\geq_{i}^{FOSD}\mu_{i}$
for all $i\in N$ and $\nu_{i}>_{i}^{FOSD}\mu_{i}$ for at least one
$i$. A mechanism is SD efficient if it always selects SD efficient
random allocations.\footnote{SD efficiency is called \emph{ordinal efficiency} in \textcite{BM01}.}

Random serial dictatorship is ex post efficient but not SD efficient.
We reproduce an example from \textcite{BM01} below to illustrate.
\begin{example}
  Let $N=\{1,2,3,4\}$ and $O=\{o_{1},o_{2},o_{3},o_{4}\}$. Let preferences
  be given by:
  \begin{align*}
    1,2 & :o_{1}\succ o_{2}\succ o_{3}\succ o_{4} \\
    3,4 & :o_{2}\succ o_{1}\succ o_{4}\succ o_{3}
  \end{align*}
  Random serial dictatorship (with uniform probability on order of selection)
  results in the random allocation $\nu$, shown below:
  \[
    \begin{array}{c|cccc}
      \nu      & o_{1} & o_{2} & o_{3} & o_{4} \\
      \hline 1 & 5/12  & 1/12  & 5/12  & 1/12  \\
      2        & 5/12  & 1/12  & 5/12  & 1/12  \\
      3        & 1/12  & 5/12  & 1/12  & 5/12  \\
      4        & 1/12  & 5/12  & 1/12  & 5/12
    \end{array}
  \]
  For example, agent 1 receives $o_{1}$ if he is first priority or
  second priority \emph{after 3 or 4}, which occurs with probability
  $5/12$. However, each agent's lottery under $\nu$ is stochastically
  dominated by his lottery under the implementable random allocation
  $\mu$ shown below:
  \[
    \begin{array}{c|cccc}
      \mu      & o_{1} & o_{2} & o_{3} & o_{4} \\
      \hline 1 & 1/2   & 0     & 1/2   & 0     \\
      2        & 1/2   & 0     & 1/2   & 0     \\
      3        & 0     & 1/2   & 0     & 1/2   \\
      4        & 0     & 1/2   & 0     & 1/2
    \end{array}
  \]
  Here, each agent's probability under $\nu$ on his second (fourth)
  favorite is moved to his first (third) favorite.
\end{example}
SD efficiency is an ordinal notion of efficiency for random allocations.
As \textcite{BM01} note, its definition does not assume agents have
von Neumann-Morgenstern (vNM) expected utilities. Therefore, from
the perspective of the market designer, SD efficiency is attractive
because it does not require eliciting vNM utilities. It is straightforward
to see that if an allocation is SD efficient, it is Pareto efficient
according to \emph{some} profile of vNM utilities.\footnote{This is \textcite{BM01} Lemma 2.}

We also consider two notions of fairness. A mechanism is \textbf{anonymous}
if re-indexing the agents simply re-indexes their allocations in the
same way. Formally, if $\pi$ is a permutation, let $\succ^{\pi}$
be the permutation of $\succ$ and $\mu^{\pi}$ be the same permutation
of a random allocation. The mechanism $f$ is anonymous if $f(\succ^{\pi})=f(\succ)^{\pi}$.
A random allocation is \textbf{envy free} if for all $i,i'\in N$,
we have $f_{i}(R)\geq_{i}^{FOSD}f_{i'}(R)$. That is, $i$ does not
prefer another agent's lottery. A mechanism is envy free if it only
selects envy free random allocations.

Finally, we consider incentives. A mechanism $f$ is \textbf{weak
  strategyproof} if, for any (true) preferences $\succ$ and any $i\in N$
with misreport $\succ_{i}'$,
\[
  f_{i}(\succ_{i}',\succ_{-i})\geq_{i}^{FOSD}f_{i}(\succ_{i},\succ_{-i})\Rightarrow f_{i}(\succ_{i}',\succ_{-i})=f_{i}(\succ_{i},\succ_{-i}).
\]
In words, if $i$ unilaterally misreports his preferences, it does
not give him a first order stochastically dominant lottery. Note that
the misreport may result in a FOSD noncomparable allocation.

We will present a greedy mechanism in the style of Probabilistic Serial.
However, it is not obvious from the primitive constraints how to implement
this greedy procedure while ensuring that object minimums are met.
The primary work of this paper is to characterize the paths from the
origin to implementable random allocations; the greedy procedure then
chooses a path in this set.

\section{Paths to implementable random allocations}

The first step is to characterize implementable random allocations
$\Delta D$ from our model primitives. An intuitive way to do this
is to enumerate $D$, then compute the convex hull of $D$. However,
this is impractical as as Example \ref{exa:factorial} demonstrates.
We will therefore give an analytic description of $\Delta D$ that
avoids enumerating the set of deterministic allocations.

We next consider the \textbf{lower contour set} of $\Delta D$ introduced
by B22, defined as
\[
  \text{lcs}(\Delta D):=\left\{ \mu'\in\mathbb{R}_{+}^{N\times O}:\exists\mu\in\Delta D\text{ where }\mu'\leq\mu\right\} .
\]
Intuitively, $\text{lcs}(\Delta D)$ are sub-random allocations that
can be completed to implementable random allocations. This characterizes
valid paths for an eating algorithm to assign probability to agents.

The key insight of B22 is that for any general constraints (including
those beyond the present paper), $\text{lcs}(\Delta D)$ can be expressed
as the intersection of the half-spaces defined by a set of positive
linear inequalities.
\begin{prop}[\textcite{Balbuzanov22}]
  \label{prop:Balb1} For any $D$, there exists an essentially unique
  $A\geq0$ and $b\geq0$ such that
  \[
    \text{lcs}(\Delta D)=\{\mu\in\mathbb{R}_{+}^{N\times O}:A\mu\le b\}.
  \]
\end{prop}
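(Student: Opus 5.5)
The plan is to obtain the statement from the Minkowski--Weyl theorem together with the observation that lower contour sets of polytopes are closed under coordinatewise decreasing within the nonnegative orthant.

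First, since $D$ is finite, $\Delta D=\text{conv}(D)$ is a polytope. Writing
\[
\text{lcs}(\Delta D)=\left(\Delta D-\mathbb{R}_{+}^{N\times O}\right)\cap\mathbb{R}_{+}^{N\times O},
\]
the first factor is the Minkowski sum of a polytope and a polyhedral cone, hence a polyhedron. Intersecting with the orthant keeps it a polyhedron, and it is bounded because every point lies below some point of $\Delta D\subseteq[0,1]^{N\times O}$. So $\text{lcs}(\Delta D)=\{\mu\geq0:A'\mu\leq b'\}$ for some real $A',b'$.

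Second, I show the rows can be taken with $A\geq0$ and $b\geq0$. The set $P:=\text{lcs}(\Delta D)$ is \emph{down-closed} in the orthant: if $0\leq\mu'\leq\mu\in P$ then $\mu'\in P$. Take any facet-defining inequality $a\cdot\mu\leq\beta$ of $P$ that is not one of the nonnegativity constraints $\mu_{ij}\geq0$, and suppose $a_{ij}<0$ for some coordinate. Pick a relative-interior point $\mu$ of that facet. If $\mu_{ij}>0$, then decreasing $\mu_{ij}$ slightly stays in $P$ by down-closedness but strictly raises $a\cdot\mu$ above $\beta$, a contradiction. Hence every relative-interior point of the facet has $\mu_{ij}=0$, so the facet lies in the hyperplane $\mu_{ij}=0$. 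It would then coincide with the facet given by nonnegativity, contradicting that it is a distinct facet.

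This argument should be run for full-dimensional $P$. If $P$ is not full-dimensional, then, since it is down-closed, some coordinates are identically zero and can be handled with constraints $\mu_{ij}\leq0$, which have $A\geq0$ and $b=0$. Finally, $b\geq0$ because $0\in P$ whenever $D\neq\emptyset$. Nonnegativity of the remaining coefficients also lets us rescale each row.

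Essential uniqueness follows from the uniqueness of the irredundant facet description of a full-dimensional polytope, up to positive scaling of rows. Intersecting with the orthant takes care of the nonnegativity facets.

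The main obstacle is the sign argument. One has to make sure that a facet with a negative coefficient must be a coordinate facet, and handle the degenerate case where $\mu_{ij}=0$ on the whole facet. I would use the relative-interior argument above for this.
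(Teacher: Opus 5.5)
Your argument is correct. The paper does not prove this proposition; it imports it from B22, so there is no in-paper argument to compare against. What you supply is the standard self-contained proof for down-monotone (anti-blocking) polytopes: $\text{lcs}(\Delta D)=(\Delta D-\mathbb{R}_{+}^{N\times O})\cap\mathbb{R}_{+}^{N\times O}$ is a polytope by Minkowski--Weyl, and down-closedness forces every non-coordinate facet normal to be nonnegative. You do not need relative-interior points for the sign step. Any point of the facet with $\mu_{ij}>0$ already gives the contradiction, so the whole facet lies in $\{\mu_{ij}=0\}$ and, $P$ being full-dimensional, equals the coordinate facet. Writing the argument out makes visible that positivity comes from comprehensiveness alone. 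That is exactly what the paper relies on when it says an inequality that begins to bind stays binding during eating.

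Two points should be tightened. First, in the degenerate case, say why the surviving coordinates carry a full-dimensional polytope. Let $J$ be the set of pairs $(i,j)$ on which some $\mu\in P$ is positive, and let $e_{ij}$ denote the unit matrix. Down-closedness gives $\varepsilon e_{ij}\in P$ for all $(i,j)\in J$ and some $\varepsilon>0$, so $\text{conv}(\{0\}\cup\{\varepsilon e_{ij}:(i,j)\in J\})\subseteq P$ is $|J|$-dimensional. This case really occurs here: if $d=1$ and $\sum_j m_j=N$, objects with $m_j=0$ are never assigned. Second, in that case uniqueness holds only modulo the identically-zero coordinates, since nonnegative multiples of $\mu_{ij}$ with $(i,j)\notin J$ can be added to any row without changing the set.

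Your caveat $D\neq\emptyset$ is genuinely needed. Any system with $b\geq0$ is satisfied by $\mu=0$, so it can never describe $\text{lcs}(\emptyset)=\emptyset$.
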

This also implies $\text{lcs}(\Delta D)$ is a convex polytope in
the positive orthant. These inequalities lead naturally to a greedy
algorithm à la PS. An agent consumes his most preferred object until
one of these inequalities binds, and then must move on to his next
favored object. Since all inequalities are positive, an inequality
that begins to bind will bind forever.

While the above result gives a theoretical guarantee for the existence
of these positive linear inequalities, they must still be computed
in practice. The points in $D$ along with their projections onto
the axes constitute the vertices of the convex polytope $\text{lcs}(\Delta D)$.
Finding the inequalities defining the facets of the polytope from
the vertices is the \emph{facet enumeration problem (FEP)}. This problem
is known to be NP-hard and has complexity scaling in the number of
vertices and dimensions, which both grow quickly by introducing new
objects or agents. We refer readers to \textcite{Seidel18} for a reference
text.

We adapt this machinery to our setting. We start with a list of constraints
$(m_{j},c_{j})_{j\in O}$ as our primitive, not the set of allowable
deterministic allocations, $D$, since size of $D$ (and complexity
of enumerating it) can be factorial. This factorial size $D$ would
then be fed into the exponential complexity FEP. Thus even in moderately
sized problems, this approach is impractical. Our approach avoids
this issue -- we directly find the analytic description of $\text{lcs}(\Delta D)$.

Once we have expressed $\text{lcs}(\Delta D)$ using a system of positive
linear inequalities, we can immediately apply GCPS. For convenience,
we give the definition of GCPS below.
\begin{defn}[\textcite{Balbuzanov22}]
  \textbf{ Generalized Constrained Probabilistic Serial (GCPS) mechanism}.
  Time runs continuously from $t\in[0,1]$; each $t$ is associated
  with a sub-random allocation $\mu^{t}$. Given a preference profile
  $\succ$ and constraints $A,b$ that define $\text{lcs}(\Delta D)$,
  object $j$ is available to agent $i$ and time $t$ if none of the
  inequalities involving $\mu_{ij}$ bind at that time. At time $t$,
  each agent claims with uniform speed remaining probability shares
  of his favorite reported object $j$ among those available to him
  at $t$. The procedure ends at time $t=1$ with output $\mu^{1}$,
  which is an implementable random allocation.
\end{defn}
However, even after finding $\text{lcs}(\Delta D)$, one more barrier
exists with regards to practical use: GCPS defines a continuous-time
eating process, not an algorithm in the strict sense. In the $d=1$
case we also express our mechanism as an algorithm (i.e., a finite
sequence of discrete steps).

\section{Minimums probabilistic serial mechanism}

We now describe our mechanism and its properties.

As before, we have a set of agents $N$ and objects $O$. Each object
$j$ has a capacity $c_{j}>0$ and a minimum $m_{j}\geq0$. Our goal
is to find the system of positive linear inequalities referenced by
Proposition \ref{prop:Balb1}, $A\mu\leq b$, which define $\text{lcs}(\Delta D)$.
We first present the leading case of unit demand, $d=1$, which has
the simplest exposition and strongest results.

\begin{prop}
  \label{prop:rand_mins}Given a market $(N,O,\succ)$ and $d=1$, the
  set $\text{lcs}(\Delta D)$ of allowable sub-random allocations $\mu\in[0,1]^{N\times O}$
  is characterized by the following system:
  \begin{align}
    \sum_{j}\mu_{ij}                 & \leq1                     &  & \forall i\in N           & \tag{Unit demand - relaxed}\nonumber \\
    \mu_{j}                          & \leq c_{j}                &  & \forall j\in O           & \tag{Cap}\nonumber                   \\
    \sum_{j\in O\backslash S}\mu_{j} & \leq N-\sum_{j\in S}m_{j} &  & \forall S\subseteq O_{m} & \tag{Min-III }\label{eq:Min-III}
  \end{align}
\end{prop}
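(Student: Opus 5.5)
We prove the two inclusions separately, using Proposition \ref{prop:deltaD} to describe $\Delta D$ (with $d=1$) by the row-sum equalities, the capacities, and the minimums.

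\textbf{Necessity.} Take $\mu'\in\text{lcs}(\Delta D)$ and a $\mu\in\Delta D$ with $\mu'\leq\mu$. The relaxed demand and capacity inequalities hold for $\mu'$ because $\mu'\le\mu$ and all entries are nonnegative. For (Min-III), fix $S\subseteq O_{m}$. Summing the row equalities of $\mu$ gives $\sum_{j\in O}\mu_{j}=N$. Hence
\[
\sum_{j\in O\backslash S}\mu'_{j}\leq\sum_{j\in O\backslash S}\mu_{j}=N-\sum_{j\in S}\mu_{j}\leq N-\sum_{j\in S}m_{j}.
\]

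\textbf{Sufficiency.} Given $\mu'\geq0$ satisfying the system, we must build $\mu\geq\mu'$ in $\Delta D$. We do this by a flow and max-flow/min-cut argument.

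\emph{The network.} Build a network with source $s$, agent nodes, object nodes, and sink $t$.
\begin{itemize}
\item Arc $s\to i$ has capacity $r_{i}:=1-\sum_{j}\mu'_{ij}\geq0$, the residual demand of agent $i$.
\item Arc $i\to j$ has capacity $1-\mu'_{ij}$; unbounded capacity also suffices if we keep entries in $[0,1]$.
\item On the object side, impose a lower bound $\ell_{j}:=\max(m_{j}-\mu'_{j},0)$ and an upper bound $u_{j}:=c_{j}-\mu'_{j}\geq0$ on the flow into $j$, and route it $j\to t$.
\end{itemize}
A feasible flow saturating all source arcs yields an increment $x$ with $\mu=\mu'+x$. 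This $\mu$ has rows summing to $1$, entries in $[0,1]$, and satisfies capacity and minimum, so $\mu\in\Delta D$ by Proposition \ref{prop:deltaD}. Because every agent can reach every object, agent-side cuts are trivial. The problem therefore reduces to a transportation problem with supply $R:=\sum_{i}r_{i}=N-\sum_{j}\mu'_{j}$ and object intervals $[\ell_{j},u_{j}]$.

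\emph{The feasibility condition.} By Hoffman's circulation theorem (or directly, since all agents connect to all objects), feasibility reduces to $\sum_{j}\ell_{j}\leq R\leq\sum_{j}u_{j}$.

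The upper inequality reads $N-\sum_{j}\mu'_{j}\leq\sum_{j}c_{j}-\sum_{j}\mu'_{j}$, which is the feasibility assumption $\sum_{j}c_{j}\geq N$.

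For the lower inequality, let $S=\{j\in O_{m}:\mu'_{j}<m_{j}\}$. Then $\sum_{j}\ell_{j}=\sum_{j\in S}(m_{j}-\mu'_{j})$, so the condition $\sum_{j}\ell_{j}\leq R$ is equivalent to
\[
\sum_{j\in O\backslash S}\mu'_{j}\leq N-\sum_{j\in S}m_{j},
\]
which is exactly (Min-III) for this particular $S$. We also need $\ell_{j}\leq u_{j}$, which follows from $m_{j}\leq c_{j}$; we will note this as implicit in the problem being feasible.

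\emph{The entrywise bound $\mu_{ij}\leq1$.} Here we must check that the arc capacities $1-\mu'_{ij}$ do not obstruct the flow. Agent $i$'s total arc capacity is $\sum_{j}(1-\mu'_{ij})\geq r_{i}$, but a cut mixing agents and objects could still bind. I expect this to be the main obstacle. I would first try to show the relevant cuts are dominated by the ones already checked, or alternatively argue that the fractional filling can be done greedily with no individual entry exceeding $1$, since $\mu'_{ij}+r_{i}\leq1$. The latter works: route each agent's residual $r_{i}$ proportionally to a feasible aggregate object-side vector $y$ with $\sum_{j}y_{j}=R$, that is, $x_{ij}=r_{i}y_{j}/R$. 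This gives $x_{ij}\leq r_{i}$ and hence $\mu_{ij}\leq\mu'_{ij}+r_{i}\leq1$, which removes the agent-side constraints entirely. The case $R=0$ is trivial: then $\mu'$ itself already has full row sums.
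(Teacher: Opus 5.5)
Your argument is correct, but it follows a different route from the paper. The paper never treats $d=1$ directly. It proves the general-$d$ system of Proposition \ref{prop:general_d} by applying Hoffman's circulation theorem to a network with lower bounds $\nu_{ij}$ and upper bounds $1$ on the agent--object arcs. It then enumerates all cuts indexed by $(S,T)$ and the four placements of $s,t$, and specializes to $d=1$ by asserting that only the cuts with $T=\emptyset$ (capacities) and $T=N$ (minimums) matter.

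You instead use the feature that makes unit demand special. Since $\mu'_{ij}+r_i\le 1$, no increment of size at most $r_i$ can break the entrywise bound. The problem therefore collapses to finding an aggregate vector $y$ in the box $\prod_j[\ell_j,u_j]$ with $\sum_j y_j=R$, which is possible iff $\ell\le u$ and $\sum_j\ell_j\le R\le\sum_j u_j$. The proportional split $x_{ij}=r_iy_j/R$ then realizes $y$ with $x_{ij}\le r_i$.

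Your route buys three things:
\begin{itemize}
\item It is an elementary, constructive proof with no cut bookkeeping.
\item It makes transparent exactly where $\sum_j c_j\ge N$ enters. The paper's ``only $T=\emptyset$ binds'' step silently needs this assumption for $S=O$, where the $T=N$ cut reads $0\le\sum_j c_j-N$.
\item Sufficiency uses only the single (Min-III) inequality with $S=\{j:\mu'_j<m_j\}$, which is exactly $\sum_j\max\{m_j,\mu'_j\}\le N$. So you get Proposition \ref{prop:rand_mins_equiv} essentially for free.
\end{itemize}
The paper's route buys generality. For $d>1$ your split can give $r_iy_j/R>1-\mu'_{ij}$, the agent--object arcs genuinely bind, and the mixed $(S,T)$ cuts are needed.

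For the write-up, drop the flow network and the appeal to Hoffman entirely, since the box-plus-proportional-split argument is self-contained. Also fix the case $R=0$: full row sums alone do not place $\mu'$ in $\Delta D$. You also need $\mu'_j\ge m_j$, which follows from your condition $\sum_j\ell_j\le R=0$. Finally, note that both your proof and the paper's implicitly use $m_j\le c_j$.
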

\begin{proof}
  Appendix \ref{sec:prelimproofs}.
\end{proof}
\begin{rem}
  The third group of inequalities is for any set $S\subseteq O_{m}$.
  Since $m_{j}=0$ for all $j\notin O_{m}$, it is equivalent to $\forall S\subseteq O$.
\end{rem}
Proposition \ref{prop:rand_mins} provides an analytic description
of the inequalities which are guaranteed to exist by Proposition \ref{prop:Balb1}
for our setting. We can now apply GCPS to express our mechanism as
a continuous-time eating process.
\begin{defn}
  \textbf{Minimums Probabilistic Serial (MPS) mechanism}. Fix $(N,O,\succ)$
  and the profile of constraints $(m_{j},c_{j})_{j\in O}$. Time runs
  continuously from $t\in[0,1]$; each $t$ is associated with a sub-random
  allocation $\mu^{t}$. Object $j$ is available at time $t$ if none
  of the inequalities in Proposition \ref{prop:rand_mins} involving
  $\mu_{j}$ bind at that time. At time $t$, each agent claims with
  uniform speed probability shares of his favorite reported object $j$
  among those available at $t$. The procedure ends at time $t=1$ with
  output $\mu^{1}$, which is an implementable random allocation.
\end{defn}
We therefore describe MPS without the need to enumerate $D$ from
the primitive constraints nor to solve the facet enumeration problem,
both of which are non-polynomial time procedures. An algorithmic definition
is contained later in this section.
\begin{thm}
  \label{thm:MPS}The MPS mechanism with unit demand is SD efficient,
  anonymous, envy free, and weak strategyproof.
\end{thm}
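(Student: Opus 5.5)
We prove the four properties separately. The central structural fact is that, by Proposition \ref{prop:rand_mins}, every constraint defining $\text{lcs}(\Delta D)$ other than the agent's own unit-demand row depends on $\mu$ only through the column sums $\mu_{j}$. Consequently, whether object $j$ is available at time $t$ is the same for every agent; it depends only on the vector $(\mu_{j}^{t})_{j\in O}$.

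\textbf{SD efficiency.} We invoke B22 directly. Proposition \ref{prop:rand_mins} exhibits a system $A\mu\le b$ with $A,b\ge0$ that describes $\text{lcs}(\Delta D)$, and MPS is exactly GCPS applied to this system. B22 shows that GCPS outputs an SD efficient allocation within $\Delta D$ for any such system, so MPS is SD efficient.

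\textbf{Anonymity.} Permuting the agents permutes the eating profile. The constraints in Proposition \ref{prop:rand_mins} are symmetric in agents: the unit-demand constraints are permuted among themselves, and all other constraints use only column sums. Hence the trajectory $\mu^{t}$ is permuted accordingly, and in particular so is the output $\mu^{1}$.

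\textbf{Envy-freeness.} Fix agents $i,i'$ and let $\succ_{i}$ rank $o_{1}\succ_{i}\cdots\succ_{i}o_{O}$. Because availability is common to all agents, the set of available objects is nonincreasing in $t$: the constraints are positive, so once a constraint binds it stays binding. Let $\tau_{k}$ be the first time all of $o_{1},\dots,o_{k}$ are unavailable. Before $\tau_{k}$, agent $i$ always eats some object in $\{o_{1},\dots,o_{k}\}$, so $\sum_{j\le k}\mu_{ij}^{1}\ge\tau_{k}$. After $\tau_{k}$, nobody can eat any of $o_{1},\dots,o_{k}$, so $\sum_{j\le k}\mu_{i'j}^{1}\le\tau_{k}$, since $i'$ eats at unit speed. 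This gives the FOSD inequality for every $k$, exactly as in \textcite{BM01}.

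\textbf{Weak strategyproofness.} This is the main obstacle. We adapt the Bogomolnaia--Moulin argument. Suppose $i$ reports $\succ_{i}'$ and obtains $\nu_{i}\ge_{i}^{FOSD}\mu_{i}$, where $\mu$ is the truthful outcome. Let $o_{1}$ be $i$'s true top object. Under truth, $i$ eats $o_{1}$ until time $\tau_{1}$, when $o_{1}$ becomes unavailable. We argue by induction on $k$ that $\nu_{ij}=\mu_{ij}$ for $j\le k$ and that the eating dynamics of all other agents coincide on $[0,\tau_{k}]$.

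For the base step, under the misreport $i$ eats $o_{1}$ at most as fast until $o_{1}$ becomes unavailable. Our objective is to show that $o_{1}$ cannot become unavailable later under the misreport. The difficulty is that the \eqref{eq:Min-III} constraints couple objects. Eating other objects, especially non-minimum ones, can make a constraint $\sum_{j\notin S}\mu_{j}\le N-\sum_{S}m_{j}$ bind earlier, which freezes every object outside $S$, possibly including $o_{1}$.

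To handle this, we order binding events. Define the aggregate $g_{S}(t)=\sum_{j\notin S}\mu_{j}^{t}$. Under truth, $i$ contributes to every $g_{S}$ with $o_{1}\notin S$ at rate one. Under the misreport, $i$ contributes at rate at most one to such $g_{S}$, while the other agents behave identically as long as availability coincides. We therefore expect that the set of frozen objects under the misreport is, at each $t$, a subset of the set frozen under truth, at least with respect to constraints involving $o_{1}$. This yields $\nu_{i1}\le\mu_{i1}$, and FOSD forces equality. Equality of $i$'s rate on $o_{1}$ then gives equality of the whole trajectory up to $\tau_{1}$, and the induction continues.

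Verifying this comparison of binding times under the coupled \eqref{eq:Min-III} constraints is the step we expect to need the most care. If a direct comparison fails, the fallback is to use the discrete algorithmic description of MPS given later in the paper, and to compare the sequence of stages it produces under truthful and misreported preferences.
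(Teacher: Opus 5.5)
\textbf{Verdict: genuine gap, in weak strategyproofness.} Your arguments for SD efficiency, anonymity, and envy-freeness are fine. They follow the paper's route: inheritance from B22, and common availability because every constraint other than an agent's own row depends only on the column sums $\mu_j$.

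\textbf{Why your weak strategyproofness sketch fails.} You leave this part at the level of an expectation, and the sketch would not work as stated.
\begin{itemize}
\item Your stated objective, that $o_1$ cannot become unavailable later under the misreport, is false even without minimums. Take two agents, two objects of capacity $1$, $m_j=0$, and both agents ranking $a\succ b$. Then $a$ closes at $t=1/2$ under truth but at $t=1$ if agent $1$ reports $b\succ a$.
\item Your heuristic that objects frozen under the misreport are frozen under truth points the opposite way. It would say $o_1$ closes weakly later under the misreport. That does not give $\nu_{i1}\le\mu_{i1}$: if $o_1$ stays open longer, $i$ has more time to eat it.
\item The coupling through $g_S$ assumes the other agents ``behave identically as long as availability coincides.'' Availability stops coinciding exactly when $i$ deviates, which is the only case of interest.
\item Even a correct $\nu_{i1}\le\mu_{i1}$ would not drive your induction. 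Equal totals do not imply that $i$ ate $o_1$ at the same times. You need the strict statement that any deviation of positive measure before $o_1$ closes forces $\nu_{i1}<\mu_{i1}$.
\end{itemize}

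\textbf{What the paper does instead.} It proves exactly this strict statement. It allows $t'_a>t_a$ and shows that the time $\delta(t_a)$ agent $1$ spent away from $a$ before $t_a$ can never be recovered. Two ingredients are missing from your proposal.

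First, a monotonicity lemma (Lemma 3): the other agents' consumption of $a$ up to $t_a$ weakly increases under the misreport. It is proved by taking, among objects that close later under the misreport, the one that closes earliest under truth, and deriving a contradiction. This yields $\mu_a(t_a,e)-\mu_a(t_a,e')\le\delta(t_a)$. Combined with the bound on total consumption of $a$, and the fact that any other agent who ate $a$ before $t_a$ keeps eating it on $[t_a,t'_a]$, agent $1$ recovers strictly less than $\delta(t_a)$.

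Second, and specific to minimums, the proof must control the time $\tau$ at which $\sum_j\max\{m_j,\mu_j\}\le N$ binds. A misreport can shift $\tau$, which changes whether $a$ is capped by $c_a$ or by $m_a$. The paper flags this as the new difficulty relative to Bogomolnaia--Moulin. Lemmas 1--2 compare consumption at $\tau$ across the two runs and show $\tau'\le\tau$ whenever $t_a>\tau$. The proof then splits into the cases $t_a<\tau$, $t_a>\tau$, and $t_a=\tau$.

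The hardest case is $t_a=\tau<\tau'$ with only agent $1$ eating $a$, and it needs an integrality argument. Here $m_a=0$ and $\mu_a(\tau,e)\in(0,1)$. The total consumed from the set $S$ of objects closed by $\tau$ equals the integer $N-\sum_{j\notin S}m_j$, so some other $b\in S$ is fractional. Either $b$ reaches capacity earlier under the misreport, or it is still being eaten after $\tau$. In both cases it absorbs part of the slack agent $1$ hoped to reclaim. Your fallback of comparing runs of the discrete algorithm is a plan, not a substitute for these steps.
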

\begin{proof}
  SD efficiency and anonymity are inherited from B22. The remainder
  is in Appendix \ref{proof:envyfree} and Appendix \ref{proof:wsp}.
\end{proof}
As Theorem 1 shows, MPS maintains the most desirable properties of
PS from \textcite{BM01}. In particular, weak strategyproofness is lost
in the more general setting of B22 but preserved in our setting. The
intuition for SD efficiency matches that in \textcite{BM01} and B22.
Agents consume their favored objects until they become unavailable.
Since the linear inequalities governing availability are positive,
an object becomes unavailable forever. Thus there is no possibility
that an object closes then reopens later. Envy freeness has a similar
intuition. Since objects are available to everyone or to no one, an
agent can never envy another agent's allocation. Anonymity is also
immediate for this reason.

Weak strategyproofness is much more challenging to prove. As becomes
clear in the description of Algorithm \ref{alg:MPS}, the total amount
of an object that is available to consume may increase if agents'
consumption of $O_{m}$ is manipulated. This possibility is not present
in \textcite{BM01} as the only constraints are maximum capacities. For
example, it is possible that in a truthful run of MPS, the total consumption
of object $a$ is capped by $m_{a}$, but under a misreport run, total
consumption of $a$ might be able to exceed $m_{a}$ if agents' consumption
is shifted into $O_{m}$. The proof (contained in Appendix \ref{proof:wsp})
shows that while this is possible, it can never benefit the misreporter.
\begin{rem}
  MPS is not an application of the Generalized Probabilistic Serial
  (GPS) mechanism contained in BCKM13. Given a bihierarchical set $\mathcal{H}$
  of constraint sets $S\subseteq N\times O$, GPS allows for constraints
  of the form
  \[
    \underline{q}_{S}\leq\sum_{(i,j)\in S}\mu_{ij}\leq\bar{q}_{S}
  \]
  where $\underline{q}_{S}=0$ for all $S\in\mathcal{H}$ that are not
  rows. In general, it is impossible to represent the constraints in
  our model this way. Details are in Appendix \ref{sec:BCKM}.
\end{rem}
The conditions in Proposition \ref{prop:rand_mins} can be re-written
more compactly. The advantage of the following representation is that
it leads easily to an algorithm that runs in a finite number of discrete
steps.
\begin{prop}
  \label{prop:rand_mins_equiv} Given a market $(N,O,\succ)$, $\text{lcs}(\Delta D)$
  is the set of all $\mu\in[0,1]^{N\times O}$ satisfying the following
  conditions:
  \begin{align}
    \sum_{j}\mu_{ij}              & \leq1      & \forall i\in N &  & \tag{Unit demand - relaxed}\nonumber \\
    \mu_{j}                       & \leq c_{j} & \forall j\in O &  & \tag{Cap}\label{eq:Cap-IV}           \\
    \sum_{j}\max\{m_{j},\mu_{j}\} & \leq N     &                &  & \tag{Min-IV}\label{eq:Min-IV}
  \end{align}
\end{prop}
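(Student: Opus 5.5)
Since Proposition \ref{prop:rand_mins} already characterizes $\text{lcs}(\Delta D)$ for $d=1$, the plan is to show that, for $\mu\in[0,1]^{N\times O}$ satisfying (Unit demand - relaxed) and (Cap), the family of inequalities (Min-III) over all $S\subseteq O_{m}$ is equivalent to the single inequality (Min-IV). The first two groups of constraints are identical in both propositions, so nothing needs to be done for them.

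The key identity is that for each $\mu$,
\[
  \sum_{j}\max\{m_{j},\mu_{j}\}=\max_{S\subseteq O_{m}}\Big(\sum_{j\in S}m_{j}+\sum_{j\in O\backslash S}\mu_{j}\Big).
\]
To see it, note that the right-hand side is a sum over objects where, for each $j\in O_{m}$, we independently choose whether $j$ contributes $m_{j}$ (if $j\in S$) or $\mu_{j}$ (if $j\notin S$), while each $j\notin O_{m}$ contributes $\mu_{j}=\max\{0,\mu_{j}\}=\max\{m_{j},\mu_{j}\}$ because $m_{j}=0$ and $\mu_{j}\geq0$. Since the choices are separable across objects, the maximum is attained by taking $S^{*}=\{j\in O_{m}:m_{j}\geq\mu_{j}\}$, and then each term equals $\max\{m_{j},\mu_{j}\}$. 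Both inequalities of the identity follow: every $S$ gives a sum at most $\sum_{j}\max\{m_{j},\mu_{j}\}$ termwise, and $S^{*}$ achieves equality.

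Given the identity, the two directions are immediate. If (Min-IV) holds, then for every $S\subseteq O_{m}$ we get $\sum_{j\in S}m_{j}+\sum_{j\notin S}\mu_{j}\leq\sum_{j}\max\{m_{j},\mu_{j}\}\leq N$, which rearranges to (Min-III). Conversely, if (Min-III) holds for all $S$, then applying it to $S^{*}$ gives $\sum_{j}\max\{m_{j},\mu_{j}\}\leq N$. Combined with Proposition \ref{prop:rand_mins}, this yields the claim.

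There is no real obstacle here. The only point needing care is the treatment of objects outside $O_{m}$, where one uses $\mu_{j}\geq0=m_{j}$. It is also worth noting, although it is not needed for the proof, that (Min-IV) is not a positive linear inequality. It is, however, equivalent to the positive system (Min-III), so it describes the same polytope, which is consistent with Proposition \ref{prop:Balb1}.
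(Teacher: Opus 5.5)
Your proof is correct and takes essentially the same approach as the paper. The paper rewrites (Min-III) as $\sum_{j\in O}\mu_{j}+\sum_{j\in S}(m_{j}-\mu_{j})\le N$ and notes that the tightest choice is any $S$ with $\{j:m_{j}>\mu_{j}\}\subseteq S\subseteq\{j:m_{j}\ge\mu_{j}\}$, which is your separable-maximization identity with maximizer $S^{*}$; your explicit handling of objects outside $O_{m}$ via $\mu_{j}\ge 0=m_{j}$ is a step the paper leaves implicit.
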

\begin{proof}
  Appendix \ref{sec:prelimproofs}.
\end{proof}
While \textcite{BM01} contains an algorithm for the PS mechanism, B22
does not for the GCPS mechanism.\footnote{Of course, his setting is significantly more complicated.}
We take advantage of the additional structure in our model to express
MPS as an algorithm.

The characterization of $\text{lcs}(\Delta D)$ in Proposition \ref{prop:rand_mins_equiv}
suggests three things that can occur to end a step of the MPS algorithm.
If the first inequality binds for any agent, it means that this agent
has consumed his total share of probability and must stop eating.
Since we restrict to uniform eating speed and unit demand, this occurs
for all agents at time $t=1$. When the second inequality binds for
any object, it means that this object has reached its maximum capacity
and can no longer be consumed. Lastly, when the third inequality binds,
the only way for agents to continue eating is to consume objects for
which $\mu_{j}<m_{j}$. That is, agents must consume from objects
whose minimum requirements are not yet satisfied.

The algorithm will run in discrete steps, denoted $s$. For any $S\subseteq O$
and $j\in S$, let $N(j,S)=\{i\in N:j\succsim_{i}k\quad\forall k\in S\}$.
That is, $N(j,S)$ is the set of agents who would consume $j$ among
the objects in $S$. Also, let $n(j,S)=|N(j,S)|$. The objective is
to find the time $t^{s}$ at which the current step $s$ will end.
Let $O^{s-1}$ be objects available at step $s$, and $O_{m}^{s-1}\subseteq O^{s-1}$
be those that have not yet met their minimum requirements.

For any object $j\in O^{s-1}$, define
\[
  t_{c}^{s}(j)=\begin{cases}
    t^{s-1}+\frac{c_{j}-\mu_{j}^{s-1}}{n(j,O^{s-1})} & \text{if }n(j,O^{s-1})>0 \\
    +\infty                                          & \text{otherwise.}
  \end{cases}
\]
Note that $t_{c}^{s}(j)$ expresses the time at which object $j$
would reach its capacity if all agents who prefer $j$ among $O^{s-1}$
begin eating from $j$ at time $t^{s-1}$. By construction, we know
that $\mu_{j}^{s-1}<c_{j}$ for all $j\in O^{s-1}$.

For any object $j\in O_{m}^{s-1}$, define
\[
  t_{m}^{s}(j)=\begin{cases}
    t^{s-1}+\frac{m_{j}-\mu_{j}^{s-1}}{n(j,O^{s-1})} & \text{if }n(j,O^{s-1})>0 \\
    +\infty                                          & \text{otherwise.}
  \end{cases}
\]
Note that $t_{m}^{s}(j)$ expresses the time at which object $j$
would satisfy its minimum requirement if all agents who prefer $j$
among $O^{s-1}$ begin eating from $j$ at time $t^{s-1}$. By construction,
we know that $\mu_{j}^{s-1}<m_{j}$ for all $j\in O_{m}^{s-1}$.

Last, define
\[
  t_{O}^{s}=\begin{cases}
    t^{s-1}+\frac{N-\sum_{j\in O_{m}^{s-1}}m_{j}-\sum_{j\notin O_{m}^{s-1}}\mu_{j}}{\sum_{j\notin O_{m}^{s-1}}n(j,O^{s-1})} & \text{if }n(j,O^{s-1})>0\text{ for some }j\notin O_{m}^{s-1} \\
    +\infty                                                                                                                 & \text{otherwise.}
  \end{cases}
\]
Note that $t_{O}^{s}$ expresses the time at which the minimum condition
from Proposition \ref{prop:rand_mins_equiv} binds, if this occurs
at step $s$.

\begin{algorithm}[H]
  \caption{Minimums Probabilistic Serial (MPS)}
  \label{alg:MPS} \begin{algorithmic}[1] \State Initialize $t^{0}=0$,
    $O_{m}^{0}=O_{m}$, $O^{0}=O$, $\mu^{0}=0$, $s=0$, and $F=\text{False}$.
    \While{$t^{s}<1$} \State Set $s=s+1$. \State Set $t^{s}=\min\big\{\{t_{c}^{s}(j):j\in O^{s-1}\setminus O_{m}^{s-1}\}\cup\{t_{m}^{s}(j):j\in O_{m}^{s-1}\}\cup\{t_{O}^{s},1\}\big\}$.
    \If{$t^{s}=t_{O}^{s}$} \State Set $F=\text{True}$. \EndIf \State
    Set $O_{m}^{s}=O_{m}^{s-1}\setminus\{j\in O_{m}^{s-1}:t_{m}^{s}(j)=t^{s}\}$.
    \If{$F$} \State Set $O^{s}=O_{m}^{s}$. \Else \State Set $O^{s}=O^{s-1}\setminus\{j\in O^{s-1}\setminus O_{m}^{s-1}:t_{c}^{s}(j)=t^{s}\}$.
    \EndIf \For{$j\in O$} \For{$i\in N$} \State Set
    \[
      \mu_{ij}^{s}=\begin{cases}
        \mu_{ij}^{s-1}+(t^{s}-t^{s-1}) & \text{if }j\in O^{s-1}\text{ and }i\in N(j,O^{s-1}), \\
        \mu_{ij}^{s-1}                 & \text{otherwise.}
      \end{cases}
    \]
    \EndFor \EndFor \EndWhile \State \Return $[\mu_{ij}^{s}]$. \end{algorithmic}
\end{algorithm}

Note that $t^{s}=t_{O}^{s}$ at most once during the run of the algorithm,
since $O^{s}=O_{m}^{s}$ and therefore $t_{O}^{s}=+\infty$ for every
subsequent step $s$ after the step at which $t^{s}=t_{O}^{s}$. We
will denote this time where $t^{s}=t_{O}^{s}$ by $\tau$; the role
of $\tau$ in our mechanism is important in proving weak strategyproofness
of MPS. By misreporting their preferences, agents may be able to alter
$\tau$. We show, however, that they can never benefit from doing
so.

Intuitively, the algorithm works as follows. At time $t=0$, agents
begin eating from their favorite objects. If an object reaches its
capacity, it is closed to all agents and agents begin eating from
their next favorite remaining object. At some time -- which corresponds
to $\tau$ as we have defined it above -- it may be necessary to
close all objects once they have met their minimum requirements in
order to ensure that all objects' minimum requirements are met. That
is, at time $\tau$, any object that has already met its minimum requirement
is closed, even if it is not yet at its capacity. Thereafter, objects
are closed upon reaching their minimums.

Thus we are able to use our results to describe MPS as an algorithm
that runs in a finite number of discrete steps. We characterize this
mechanism without enumerating the deterministic allocations nor solving
the facet enumeration problem, which would be a non-polynomial complexity
problem with factorial inputs. Each step of our algorithm requires
calculating at most $|O|+1$ values. At the end of each step, either
1) an object reaches its capacity; 2) an object's minimum is satisfied;
3) the minimums become binding across the market; or 4) $t=1$. Therefore
there are at most $2|O|+2$ steps.

There does not exist an algorithmic description of GCPS. Of course,
B22's setting is much more complicated. However, in our setting, we
overcome a significant hurdle to practical usage by providing a polynomial
time algorithm.

We present a simple example to illustrate MPS.
\begin{example}
  Let $N=\{1,2,3\}$ and $O=\{o_{1},o_{2},o_{3}\}$ with the following
  properties:
  \[
    \begin{array}{c|cc}
                   & c_{j} & m_{j} \\
      \hline o_{1} & 2     & 1     \\
      o_{2}        & 2     & 1     \\
      o_{3}        & 2     & 0
    \end{array}
  \]
  Let preferences be given by:
  \begin{align*}
    1,2 & :o_{1}\succ_{i}o_{2}\succ_{i}o_{3} \\
    3   & :o_{3}\succ_{i}o_{1}\succ_{i}o_{2}
  \end{align*}

  For convenience, we denote deterministic allocations as $(M_{1},M_{2},M_{3})$,
  the number of agents assigned to each object; and analogously for
  random allocations. The allowable discrete allocations are $D=\{(2,1,0),(1,2,0),(1,1,1)\}$.
  Following Proposition \ref{prop:rand_mins}, $\text{lcs}(\Delta D)$
  is characterized by
  \begin{align*}
    \mu_{i}         & \leq1      &  & \forall i\in N \\
    \mu_{j}         & \leq c_{j} &  & \forall j\in O \\
    \mu_{3}         & \leq1                          \\
    \mu_{1}+\mu_{3} & \leq2                          \\
    \mu_{2}+\mu_{3} & \leq2.
  \end{align*}

  To implement MPS, agents consume their favorite available object until
  an inequality binds. Concretely, the steps are
  \begin{enumerate}
    \item Agents 1 and 2 consume $o_{1}$. Agent 3 consumes $o_{3}$. At $t=2/3$,
          $\mu_{1}+\mu_{3}\leq2$ binds. Afterwards, only $o_{3}$ is available
          to consume.
    \item Agents 1, 2, and 3 consume $o_{3}$ until $t=1$.
  \end{enumerate}
  The final random allocation produced is
  \[
    \begin{array}{c|ccc}
      \mu      & o_{1} & o_{2} & o_{3} \\
      \hline 1 & 2/3   & 1/3   & 0     \\
      2        & 2/3   & 1/3   & 0     \\
      3        & 0     & 1/3   & 2/3
    \end{array}
  \]
\end{example}

\subsection{General demand $d$}

We now present results for general demand $d$. As before, the main
task is to characterize $\text{lcs}(\Delta D)$.
\begin{prop}
  \label{prop:general_d} Given a market $(N,O,\succ)$ and $d\in\mathbb{N}$,
  the set $\text{lcs}(\Delta D)$ of allowable sub-random allocations
  $\mu\in[0,1]^{N\times O}$ is characterized by the following system:
  \begin{align}
    \sum_{i\in N\backslash T,j\in S}\mu_{ij} & \leq|T|\ (|O\backslash S|-d)+\sum_{j\in S}c_{j}    &  & \forall S\subseteq O,\forall T\subseteq N     & \tag{Cap-V }\label{eq:Cap-V} \\
    \sum_{i\in T,j\in O\backslash S}\mu_{ij} & \leq|T|\ d+|N\backslash T|\ |S|-\sum_{j\in S}m_{j} &  & \forall S\subseteq O_{m},\forall T\subseteq N & \tag{Min-V }\label{eq:Min-V}
  \end{align}
\end{prop}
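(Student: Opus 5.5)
We prove the two inclusions separately.

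\emph{Necessity.} By Proposition \ref{prop:deltaD} (whose proof is the same for general $d$), $\mu\in\text{lcs}(\Delta D)$ means $\mu\le\nu$ for some $\nu\in[0,1]^{N\times O}$ with rows summing to $d$ and $m_j\le\nu_j\le c_j$. Since both left-hand sides are monotone in $\mu$, it suffices to verify the inequalities at $\nu$.

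For \eqref{eq:Cap-V}, split the $S$-columns into agents in $T$ and agents outside $T$:
\[
\sum_{i\notin T,\,j\in S}\nu_{ij}=\sum_{j\in S}\nu_j-\sum_{i\in T,\,j\in S}\nu_{ij}.
\]
Each $i\in T$ satisfies $\sum_{j\in S}\nu_{ij}\ge d-|O\setminus S|$, because $\nu_{ij}\le1$ on the complement. Combining this with $\nu_j\le c_j$ gives the bound.

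For \eqref{eq:Min-V}, write
\[
\sum_{i\in T,\,j\notin S}\nu_{ij}=|T|d-\sum_{i\in T,\,j\in S}\nu_{ij},
\]
and
\[
\sum_{i\in T,\,j\in S}\nu_{ij}=\sum_{j\in S}\nu_j-\sum_{i\notin T,\,j\in S}\nu_{ij}\ge\sum_{j\in S}m_j-|N\setminus T||S|.
\]
Substituting the second display into the first yields the bound.

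\emph{Sufficiency.} This is the substantive direction. Given $\mu\ge0$ satisfying both families, we must build a completion $\nu\ge\mu$ in $\Delta D$. I would cast this as a feasibility question for a network flow with lower and upper bounds:
\begin{itemize}
\item a source $s$ sends exactly $d$ units to each agent node $i$;
\item arc $(i,j)$ carries the amount $\nu_{ij}\in[\mu_{ij},1]$;
\item object node $j$ sends $\nu_j\in[m_j,c_j]$ to a sink $t$;
\item the total flow is fixed at $Nd$.
\end{itemize}
Feasible flows correspond exactly to completions. The demand-plus-capacity structure is a bipartite transportation polytope, so integrality and total unimodularity hold and Proposition \ref{prop:deltaD} identifies such a $\nu$ as implementable.

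By Hoffman's circulation theorem, a feasible circulation exists if and only if, for every cut, the sum of lower bounds on arcs entering the cut does not exceed the sum of upper bounds on arcs leaving it. Any relevant cut is determined by a set $T$ of agents and a set $S$ of objects on one side.

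The main work is organizing these cut conditions. Cuts using object upper bounds $c_j$ on the object-to-sink arcs, together with the $d$-row constraints and the lower bounds $\mu_{ij}$ on crossing agent-object arcs, should reduce to \eqref{eq:Cap-V}. Cuts using the lower bounds $m_j$ should reduce to \eqref{eq:Min-V}. In both cases the terms $|T|\,|O\setminus S|$ and $|N\setminus T|\,|S|$ come from the unit upper bounds on the agent-object arcs.

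I expect the main obstacle to be the bookkeeping that checks every Hoffman cut is dominated by one of the two stated families. This includes degenerate cuts containing $s$ or $t$, and cuts mixing $m_j$ lower bounds on some objects with $c_j$ upper bounds on others. The mixed cuts must be shown to be implied by the pure ones, or to be vacuous because of the feasibility assumptions $\sum_j c_j\ge Nd$ and $\sum_j m_j\le N$ and the pointwise bounds $\mu_{ij}\le1$.

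If the mixed cuts resist this reduction, I would fall back on a two-stage argument. First complete $\mu$ to satisfy the minimums using the \eqref{eq:Min-V} Hall-type condition. Then fill the remaining rows to $d$ under capacities using \eqref{eq:Cap-V}, via a Gale–Ryser style argument. The point to verify is that the second stage never breaks the minimums, which holds because raising entries only increases the column sums $\nu_j$.
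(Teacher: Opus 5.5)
Your route is the paper's own: the same network ($s\to i$ with $l=u=d$, $i\to j$ with bounds $[\mu_{ij},1]$, $j\to t$ with bounds $[m_j,c_j]$, return arc $t\to s$ with $l=u=Nd$), Hoffman's theorem, and cuts indexed by $T=X\cap N$, $S=X\cap O$ together with the membership of $s$ and $t$. Your direct necessity computation is correct; the paper instead reads necessity off the ``only if'' half of Hoffman.

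For sufficiency, however, you stop exactly where the paper's proof does its work. The obstacle you anticipate is not there, because no cut mixes $m_j$'s with $c_j$'s. For a fixed $X$, every crossing arc $(j,t)$ crosses in the same direction, and that direction depends only on whether $t\in X$. If $t\notin X$, only the arcs from $S$ cross, and they leave $X$ with upper bounds $c_j$. If $t\in X$, only the arcs from $O\setminus S$ cross, and they enter with lower bounds $m_j$. The source and return arcs contribute the same net amount whether or not $s\in X$. For instance, when $t\notin X$ they give $Nd$ entering and $d\,|N\setminus T|$ leaving if $s\in X$, or $|T|\,d$ entering if $s\notin X$; either way the net is $|T|\,d$ on the left. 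So the four cases collapse to
\[
|T|\,d+\sum_{i\notin T,\,j\in S}\mu_{ij}\le|T|\,|O\setminus S|+\sum_{j\in S}c_{j}\qquad(t\notin X),
\]
which is \eqref{eq:Cap-V}, and
\[
\sum_{i\notin T,\,j\in S}\mu_{ij}+\sum_{j\notin S}m_{j}\le|T|\,|O\setminus S|+|N\setminus T|\,d\qquad(t\in X),
\]
which is \eqref{eq:Min-V} after renaming $T\mapsto N\setminus T$ and $S\mapsto O\setminus S$. This gives \eqref{eq:Min-V} for every $S\subseteq O$. The instance for $S$ is implied by the one for $S\cap O_{m}$: adding a zero-minimum object to $S$ weakly lowers the left side and raises the right side by $|N\setminus T|$. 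So restricting to $S\subseteq O_m$ loses nothing.

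The aggregate feasibility assumptions are never needed separately; for example, $\sum_j c_j\ge Nd$ is the instance $T=N$, $S=O$ of \eqref{eq:Cap-V}. The bounds $\mu_{ij}\le1$ and $m_j\le c_j$ enter only through Hoffman's arc-wise requirement $l\le u$.

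You should drop the two-stage fallback, because it fails as stated. Stage 2 obviously preserves the minimums. The real question is whether stage 2 is feasible at all, and it is exactly when the stage-1 output still satisfies \eqref{eq:Cap-V}. An arbitrary minimum-meeting raise can break this.

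Take $N=\{1,2\}$, $O=\{a,b,c\}$, $d=2$, $(c_a,c_b,c_c)=(1,2,1)$, $m_a=1$, and $\mu$ the indicator of $(1,c)$. Then $\mu$ is completable: agent 1 gets $\{b,c\}$ and agent 2 gets $\{a,b\}$. Yet raising $\mu_{1a}$ to $1$ meets the minimum while respecting row and column bounds. Afterwards agent 2 can collect only one unit, and indeed \eqref{eq:Cap-V} fails at $S=\{a,c\}$, $T=\{2\}$. The single circulation avoids this by choosing both stages jointly.
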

\begin{proof}
  Appendix \ref{sec:prelimproofs}.
\end{proof}
\begin{rem}
  The demand restriction $\mu_{i}\leq d$ is contained in \ref{eq:Min-V}.
  Let $S=\emptyset$ and $T=\{i\}$.
\end{rem}
The definition of the MPS mechanism is analogous; agent $i$ consumes
his favorite object $j$ until an inequality binds, then consumes
his favorite remaining object. The formal description is omitted.
As is evident in the number of inequalities, general demand greatly
increases the complexity of the problem. We leave an algorithmic description,
if one exists, as future research.

MPS preserves our efficiency and fairness properties in this general
case. However, as noted in \textcite{K09}, weak strategyproofness does
not hold.
\begin{thm}
  \label{thm:MPS_general}The MPS mechanism is SD efficient, anonymous,
  and envy free.
\end{thm}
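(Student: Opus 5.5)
The plan is to treat the three properties separately. SD efficiency and anonymity should follow from B22 once $\text{lcs}(\Delta D)$ is written as a system of positive linear inequalities, and Proposition \ref{prop:general_d} provides exactly that system. \emph{SD efficiency.} By Proposition \ref{prop:general_d}, $\text{lcs}(\Delta D)=\{\mu\geq 0: A\mu\leq b\}$, where $A\geq 0$ and $b$ are read off from (Cap-V) and (Min-V). MPS is then literally GCPS run on this system, so B22's efficiency theorem applies verbatim. \emph{Anonymity.} The family of inequalities is invariant under any permutation $\pi$ of $N$, since it is indexed by all $T\subseteq N$ and the right-hand sides depend only on $|T|$. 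Hence the eating process under $\succ^{\pi}$ is the $\pi$-relabelling of the process under $\succ$, which gives $f(\succ^{\pi})=f(\succ)^{\pi}$.

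\emph{Envy-freeness.} This is the real content, because for $d>1$ the binding inequalities mention specific sets $T$ of agents. Availability is therefore agent-specific, and the unit-demand intuition that "objects are open to everyone or no one" is not available. Fix agents $i,i'$, list $i$'s objects $o_1\succ_i\cdots\succ_i o_O$, and write $U_k=\{o_1,\dots,o_k\}$. The goal is to show
\[
  \sum_{j\in U_k}\mu^{t}_{ij}\;\geq\;\sum_{j\in U_k}\mu^{t}_{i'j}\qquad\text{for all }t\in[0,1]\text{ and all }k.
\]
Since both sides start at $0$, it suffices to show that the left side grows at least as fast as the right side whenever equality holds. Because $i$ eats at unit speed, the only danger is a time $t$ at which every object of $U_k$ is unavailable to $i$ while $i'$ is still eating some $j^*\in U_k$. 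So the main lemma to prove is the following: if at $\mu^t$ every $j\in U_k$ lies in some binding inequality involving $\mu_{ij}$, and $\mu^t_i(U_k)\geq\mu^t_{i'}(U_k)$, then $j^*$ is also blocked for $i'$.

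For the main lemma I would use an exchange argument on the binding inequality. Take a binding (Cap-V) inequality indexed by $(S,T)$ with $i\notin T$ and $j\in S$, or a binding (Min-V) inequality indexed by $(S,T)$ with $i\in T$ and $j\notin S$. Swap the roles of $i$ and $i'$ to obtain the index $(S,T')$, with $T'=T\triangle\{i,i'\}$ when exactly one of $i,i'$ lies in $T$. The right-hand side is unchanged because $|T'|=|T|$. The left-hand side changes by the difference between the masses that $i$ and $i'$ hold on $S$ (respectively on $O\setminus S$). One then needs to choose the binding inequality, or modify $S$ by intersecting or unioning with $U_k$ (the usual supermodular uncrossing of tight sets), so that this difference has the right sign. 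That would make the swapped inequality tight as well, and hence block $j^*$ for $i'$. Feasibility of $\mu^t$ in the swapped inequality gives one direction of the comparison, and tightness of the original inequality gives the other.

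I expect the uncrossing step to be the main obstacle: showing that a tight set $S$ can be chosen to be "aligned" with $U_k$, that is, $S\supseteq U_k$ in the capacity case or $O\setminus S\supseteq U_k$ in the minimum case. I would first try to prove that the tight constraints of each type are closed under union and intersection of $S$, which is the standard submodularity property of such polymatroid-like systems. I would then take the maximal tight set containing $i$'s blocked objects. If this fails in general, a fallback is to derive envy-freeness from anonymity plus symmetry: run the process with $i'$ reporting $\succ_i$ and compare, although this seems weaker.
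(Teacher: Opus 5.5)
Your arguments for SD efficiency and anonymity are fine and match the paper: SD efficiency is inherited from B22, and the (Cap-V)/(Min-V) family is indexed by all $T\subseteq N$ with right-hand sides depending only on $|T|$. The envy-freeness part has a genuine gap, and it cannot be closed. Your main lemma is false, and for $d\geq 2$ MPS is not envy free in the paper's sense, i.e. $\mu_i\geq_i^{FOSD}\mu_{i'}$ can fail. You were right that common availability, which is all the paper's appendix argument uses, breaks down once $d>1$. That breakdown is fatal rather than technical.

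Here is a counterexample. Take $N=\{1,2,3\}$, $O=\{a,b,k\}$, $d=2$, $c_a=c_b=c_k=2$ and all $m_j=0$. Agent 1 ranks $k\succ a\succ b$, agents 2 and 3 rank $a\succ b\succ k$, and everyone eats at unit speed on $[0,2]$.

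\emph{On $[0,1]$.} Agent 1 eats $k$ and agents 2, 3 eat $a$. At $t=1$, $a$ reaches capacity, and $k$ closes for agent 1 via $\mu_{1k}\leq 1$.

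\emph{On $[1,\tfrac32]$.} Everyone eats $b$. At $t=\tfrac32$ the (Cap-V) inequality with $S=\{a,b\}$, $T=\{1\}$ binds, namely $\sum_{i\in\{2,3\},\,j\in\{a,b\}}\mu_{ij}\leq 1\cdot(1-2)+4=3$. This inequality involves $\mu_{2b},\mu_{3b}$ but not $\mu_{1b}$. Every inequality involving $\mu_{1b}$, $\mu_{2k}$ or $\mu_{3k}$ is slack, and stays slack until $t=2$.

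\emph{On $[\tfrac32,2]$.} Agent 1 eats $b$ while agents 2, 3 eat $k$.

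In the order $(a,b,k)$, the output is $\mu_1=(0,1,1)$ and $\mu_2=\mu_3=(1,\tfrac12,\tfrac12)$. Agent 1 holds mass $1$ on his upper set $\{k,a\}$, while agent 2 holds $\tfrac32$ there, so $\mu_1\not\geq_1^{FOSD}\mu_2$.

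At $t=\tfrac32$ the hypotheses of your main lemma hold exactly, with $i=1$, $i'=2$, $U=\{k,a\}$. Agent 1 is blocked on all of $U$ and $\mu_1(U)=\mu_2(U)=1$, yet $k$ stays available to agent 2. The block on $(1,k)$ is the box constraint $\mu_{1k}\leq 1$, which is not a (Cap-V)/(Min-V) inequality. So your swap $T\mapsto T\triangle\{i,i'\}$ has nothing to act on, and no uncrossing of tight sets can repair this.

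The instance has no minimums at all. The failure therefore comes from multi-unit demand with multi-copy objects itself: the Gale--Hall-type constraints with $T\neq\emptyset$ interact with the per-entry caps. At best a weaker notion might survive, namely that no agent's lottery strictly SD-dominates one's own; this is not violated in the example, since agent 1 has more of $k$. Establishing even that would require a different argument.
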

\begin{proof}
  Appendix \ref{proof:envyfree} and Appendix \ref{proof:wsp}.
\end{proof}

\subsection{Implementation}

Once we arrive at an implementable random allocation $\mu$ via MPS,
it remains to implement $\mu$ by finding a randomization over $D$.
Since $\mu$ may not be bistochastic, the Birkhoff-von Neumann theorem
is not applicable. Recall that MPS is not a special case of BCKM13's
Generalized Probabilistic Serial (GPS) mechanism, since GPS does not
accept minimums on the constraint sets. However, we can still apply
the implementation results in BCKM13. Thus, after using MPS to arrive
at an implementable random allocation, we are able to apply results
from BCKM13 to randomize over allowable deterministic allocations.
We refer the reader to their paper for details.

\section{Conclusion}

This paper deals with the allocation of objects to agents in a unit
demand setting. It is not difficult to motivate minimums in this setting.
Minimums might reflect lower bounds on the sizes of viable classes,
projects, or basketball teams. However, finding implementable allocations
that respect the minimums is nontrivial. We construct a new mechanism,
the Minimums Probabilistic Serial (MPS) mechanism, that yields an
implementable random allocation when objects may have both minimum
and maximum restrictions. We show that MPS is SD efficient, anonymous,
and envy free. In the unit demand case, it is also weak strategyproof.
The last property is preserved, unlike in the more general setting
of \textcite{Balbuzanov22}. Importantly, MPS with unit demand is also
computationally feasible -- we provide a polynomial time algorithmic
description in this case. We hope that our results help to bridge
the gap between the literature and actual implementation in real applications.

\newpage{}

\printbibliography[title=References]

\newpage{}

\appendix

\section{Proofs}

\subsection{Proofs of Propositions \ref{prop:deltaD}, \ref{prop:rand_mins},
  and \ref{prop:rand_mins_equiv}}

\label{sec:prelimproofs}
\begin{proof}[\textbf{Proof of Proposition \ref{prop:deltaD}.}]
  Denote
  \[
    D=\left\{ M\in\{0,1\}^{N\times O}:M_{i}=d,\quad m_{j}\leq M_{j}\leq c_{j}\right\}
  \]
  and
  \[
    R=\left\{ \mu\in[0,1]^{N\times O}:\mu_{i}=d,\quad m_{j}\leq\mu_{j}\leq c_{j}\right\}
  \]
  We wish to show that $\Delta D=R$. That $\Delta D\subseteq R$ is
  immediate: $D\subset R$, and $R$ is convex. We show $\Delta D\supseteq R$.
  Write $R$ as $R=\{\mu:\Gamma\mu\leq\kappa\}$, where $\Gamma$ and
  $\kappa$ are not necessarily positive. By \textcite{Edmonds70}\footnote{Line 39; cf. BCKM13},
  since $\Gamma$ represents the incidence matrix of a bihierarchy,
  $\Gamma$ is totally unimodular (TUM). Then the extreme points of
  $R$ are integral. It is straight forward that $R\cap\mathbb{Z}^{N\times O}\subseteq D$.
  Thus $\Delta D\supseteq R$ as desired.
\end{proof}
\medskip{}

We now turn to the proof of Proposition \ref{prop:general_d}, of
which Proposition \ref{prop:rand_mins} is a special case. We first
note a useful graph theoretic result. Let $G=(V,E)$ be a directed
network, with finite nodes $V$ and directed arcs $E\subseteq V\times V$.
A \textbf{circulation} is an assignment of flow values $x_{ij}$ for
$(i,j)\in E$ such that $\sum_{(i,j)\in E}x_{ij}-\sum_{(j,i)\in E}x_{ji}=0$
for all $i\in V$. Further, let there be bounds $l_{ij}\leq x_{ij}\leq u_{ij}$
for all $(i,j)\in E$. A circulation is feasible if it respects these
bounds.
\begin{thm*}[Hoffman's existence theorem, \citeyear{Hoffman60}.]
  A feasible circulation exists if and only if, for every set $X\subseteq V$,
  \[
    \sum_{(i,j)\in E,i\in S,j\notin S}l_{ij}\leq\sum_{(i,j)\in E,i\in S,j\notin S}u_{ij}.
  \]
  That is, the lower bound of the in-arcs is weakly less than the upper
  bound of the out-arcs.
\end{thm*}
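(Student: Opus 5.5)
The statement above is Hoffman's circulation theorem, and I would prove it by reducing it to max-flow/min-cut. I read the condition as intended: for every $X\subseteq V$, the sum of $l_{ij}$ over arcs entering $X$ (that is, $i\notin X$, $j\in X$) is at most the sum of $u_{ij}$ over arcs leaving $X$ (that is, $i\in X$, $j\notin X$). I also assume $l_{ij}\le u_{ij}$ on every arc, since otherwise no feasible flow exists.

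\emph{Necessity.} Let $x$ be a feasible circulation and fix $X\subseteq V$. Sum the conservation equations over all $i\in X$. Each arc with both ends in $X$ appears once with a plus sign and once with a minus sign, so these terms cancel. What remains is that the total flow on arcs entering $X$ equals the total flow on arcs leaving $X$. Therefore
\[
\sum_{\text{in}}l \le \sum_{\text{in}}x = \sum_{\text{out}}x \le \sum_{\text{out}}u .
\]

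\emph{Sufficiency.} First shift the flow by writing $x=l+y$ with $0\le y_{ij}\le u_{ij}-l_{ij}$. Define the imbalance of each node as
\[
b(v)=\sum_{(k,v)\in E}l_{kv}-\sum_{(v,k)\in E}l_{vk}.
\]
Then $x$ is a circulation exactly when $y$ sends a net amount $b(v)$ out of each node $v$. This is a supply/demand problem with supplies summing to zero.

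To encode it, add a source $s$ and a sink $t$. Add an arc $s\to v$ of capacity $b(v)$ whenever $b(v)>0$, and an arc $v\to t$ of capacity $-b(v)$ whenever $b(v)<0$. Let $B=\sum_{b(v)>0}b(v)$. A feasible circulation exists if and only if there is an $s$-$t$ flow of value $B$, since such a flow saturates every source and sink arc. By the max-flow min-cut theorem, this holds if and only if every $s$-$t$ cut has capacity at least $B$.

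Now take a general cut, with $s$-side $\{s\}\cup X$. Its capacity is
\[
\sum_{v\notin X,\,b(v)>0}b(v)+\sum_{v\in X,\,b(v)<0}(-b(v))+\sum_{\text{out of }X}(u-l).
\]
Subtracting $B$, the condition that this is at least $B$ becomes
\[
\sum_{\text{out of }X}(u-l)\ge\sum_{v\in X}b(v).
\]

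The step that needs the most care is the bookkeeping here. I have to check that $\sum_{v\in X}b(v)$ collapses, because arcs internal to $X$ cancel, to
\[
\sum_{\text{in}}l-\sum_{\text{out}}l.
\]
Once this is established, the cut condition is literally $\sum_{\text{out}}u\ge\sum_{\text{in}}l$, which is the hypothesis. So every cut has capacity at least $B$, a flow of value $B$ exists, and $x=l+y$ is the desired feasible circulation.

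I would also remark that if $l$ and $u$ are integral, the max-flow can be chosen integral, so the circulation is integral. This is the form that is likely to be used later in proving Proposition \ref{prop:general_d}.
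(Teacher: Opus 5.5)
Your proof is correct, but the paper has no proof to compare it with. It imports the theorem as a black box from Hoffman (1960) and uses it only in the four-case verification for Proposition \ref{prop:general_d}. Your route (shift by $l$, record the imbalances $b(v)$, attach a super source and sink, apply max-flow/min-cut) is the standard textbook reduction, and the bookkeeping you flagged works out. Every $s$--$t$ cut has the form $\{s\}\cup X$, and arcs internal to $X$ cancel in $\sum_{v\in X}b(v)$. So the cut inequality reduces exactly to $\sum_{\text{out}}u\ge\sum_{\text{in}}l$. Another standard route applies LP duality (Farkas' lemma) directly to the system $l\le x\le u$ plus conservation. Your reduction has the advantage of giving integrality through the max-flow integrality theorem, with no separate unimodularity argument.

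Your version also repairs two defects in the printed statement. First, as typeset, both sums run over arcs with $i\in S$, $j\notin S$, which makes the condition vacuous. Your reading (lower bounds on arcs entering $X$, upper bounds on arcs leaving $X$) matches the verbal gloss and is what the paper actually applies in Cases 1--4.

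Second, $l_{ij}\le u_{ij}$ is not a without-loss assumption, as your phrasing ``since otherwise no feasible flow exists'' suggests. It is a genuine extra hypothesis needed for the ``if'' direction. Consider the two-cycle $a\to b\to a$ with $l_{ab}=2$, $u_{ab}=1$, $l_{ba}=0$, $u_{ba}=3$: every cut condition holds, yet no feasible circulation exists. The hypothesis does hold in the paper's network, since $\nu_{ij}\le 1$ and $m_j\le c_j$.

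Finally, your closing guess about how the result is used later is off. The lower bounds $\nu_{ij}$ in Proposition \ref{prop:general_d} are fractional, so the paper needs only a real-valued circulation $\mu\ge\nu$, not an integral one. Implementability of that $\mu$ comes separately from Proposition \ref{prop:deltaD}.
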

We now prove Proposition \ref{prop:general_d}.
\begin{proof}[\textbf{Proof of Proposition \ref{prop:general_d}.}]
  Fix some $\nu\in[0,1]^{N\times O}$ as a candidate (sub-)random
  allocation. It suffices to show that a random allocation $\mu\geq\nu$
  such that
  \begin{align*}
     & \nu_{ij}\leq\mu_{ij}\leq1  \\
     & \mu_{i}=d                  \\
     & m_{j}\leq\mu_{j}\leq c_{j}
  \end{align*}
  exists if and only if the inequalities in the proposition are respected.
  This is equivalent to a feasible circulation problem. Construct a
  digraph with vertices $V=\{s,t\}\cup N\cup O$; and edges $E$
\end{proof}
\begin{itemize}
  \item $s\rightarrow i$ for all $i\in N$, with $l_{si}=u_{si}=d$
  \item $i\rightarrow j$ for all $i\in N,j\in O$, with $l_{ij}=\nu_{ij}$
        and $u_{ij}=1$
  \item $j\rightarrow t$ for all $j\in O$, with $l_{jt}=m_{j}$ and $u_{jt}=c_{j}$;
        and
  \item $t\rightarrow s$, with $l_{ts}=u_{ts}=Nd$.
\end{itemize}
We apply Hoffman's existence theorem and check all subsets of vertices
$X$; consider any subsets $T=X\cap N$ and $S=X\cap O$, with four
cases depending on inclusion of $s$or $t$ in $X$.
\begin{enumerate}
  \item \textbf{Case 1.} $s\in X,t\notin X$. The upper bound of the out-arcs
        is
        \[
          \underset{\text{s\ensuremath{\rightarrow}N\textbackslash T}}{\underbrace{d\ |N\backslash T|}}+\underset{\text{T\ensuremath{\rightarrow}O\textbackslash S}}{\underbrace{|T||O\backslash S|}}+\underset{S\rightarrow t}{\underbrace{\sum_{j\in S}c_{j}}}.
        \]
        The lower bound of the in-arcs is
        \[
          \underset{\text{N\textbackslash T\ensuremath{\rightarrow}S}}{\underbrace{\sum_{i\in N\backslash T,j\in S}\nu_{ij}}}+\underset{t\rightarrow s}{\underbrace{Nd}}
        \]
        The associated if and only if condition is
        \[
          \sum_{i\in N\backslash T,j\in S}\nu_{ij}+Nd\leq d\ |N\backslash T|+|T|\ |O\backslash S|+\sum_{j\in S}c_{j}
        \]
        \[
          \iff\sum_{i\in N\backslash T,j\in S}\nu_{ij}\leq|T|\ (|O\backslash S|-d)+\sum_{j\in S}c_{j}.
        \]
  \item \textbf{Case 2.} $s\notin X,t\notin X$. The upper bound of the out-arcs
        is
        \[
          \underset{\text{T\ensuremath{\rightarrow}O\textbackslash S}}{\underbrace{|T|\ |O\backslash S|}}+\underset{\text{S\ensuremath{\rightarrow}t}}{\underbrace{\sum_{j\in S}c_{j}}}.
        \]
        The lower bound of the in-arcs is
        \[
          \underset{s\rightarrow T}{\underbrace{|T|\ d}}+\underset{N\backslash T\rightarrow S}{\underbrace{\sum_{i\in N\backslash T,j\in S}\nu_{ij}}}.
        \]
        The associated if and only if condition is the same as in Case 1.
  \item \textbf{Case 3.} $s\notin X,t\in X$. The upper bound of the out-arcs
        is
        \[
          \underset{\text{T\ensuremath{\rightarrow}O\textbackslash S}}{\underbrace{|T|\ |O\backslash S|}}+\underset{\text{t\ensuremath{\rightarrow}s}}{\underbrace{Nd}}.
        \]
        The lower bound of the in-arcs is
        \[
          \underset{s\rightarrow T}{\underbrace{|T|\ d}}+\underset{N\backslash T\rightarrow S}{\underbrace{\sum_{i\in N\backslash T,j\in S}\nu_{ij}}}+\underset{O\backslash S\rightarrow t}{\underbrace{\sum_{j\in O\backslash S}m_{j}}}.
        \]
        The associated if and only if condition is
        \[
          |T|\ d+\sum_{i\in N\backslash T,j\in S}\nu_{ij}+\sum_{j\in O\backslash S}m_{j}\leq|T|\ |O\backslash S|+Nd
        \]
        \[
          \iff\sum_{i\in N\backslash T,j\in S}\nu_{ij}\leq|N\backslash T|\ d+|T|\ |O\backslash S|-\sum_{j\in O\backslash S}m_{j}.
        \]
        Swapping the roles of $N\backslash T\leftrightarrow T$ and $O\backslash S\leftrightarrow S$
        gives
        \[
          \sum_{i\in T,j\in O\backslash S}\nu_{ij}\leq|T|\ d+|N\backslash T|\ |S|-\sum_{j\in S}m_{j}.
        \]
  \item \textbf{Case 4. }$s\in X,t\in X$. The upper bound of the out-arcs
        is
        \[
          \underset{\text{s\ensuremath{\rightarrow}N\textbackslash T}}{\underbrace{d\ |N\backslash T|}}+\underset{\text{T\ensuremath{\rightarrow}O\textbackslash S}}{\underbrace{|T||O\backslash S|}}.
        \]
        The lower bound of the in-arcs is
        \[
          \underset{N\backslash T\rightarrow S}{\underbrace{\sum_{i\in N\backslash T,j\in S}\nu_{ij}}}+\underset{O\backslash S\rightarrow t}{\underbrace{\sum_{j\in O\backslash S}m_{j}}}.
        \]
        The associated if and only if condition is the same as in Case 3.
\end{enumerate}
Thus $\nu\in\text{lcs}(\Delta D)$ if and only if
\begin{align*}
  \sum_{i\in N\backslash T,j\in S}\nu_{ij} & \leq|T|\ (|O\backslash S|-d)+\sum_{j\in S}c_{j}    \\
  \sum_{i\in T,j\in O\backslash S}\nu_{ij} & \leq|T|\ d+|N\backslash T|\ |S|-\sum_{j\in S}m_{j}
\end{align*}
as desired.

\medskip{}
We prove the unit demand inequalities in Proposition \ref{prop:rand_mins}
as a special case of general demand.
\begin{proof}[\textbf{Proof of Proposition \ref{prop:rand_mins}.}]
  Let $d=1$; then the inequalities become
  \begin{align*}
    \sum_{i\in N\backslash T,j\in S}\mu_{ij} & \leq|T|\ (|O\backslash S|-1)+\sum_{j\in S}c_{j} \\
    \sum_{i\in T,j\in O\backslash S}\mu_{ij} & \leq|T|+|N\backslash T|\ |S|-\sum_{j\in S}m_{j}
  \end{align*}
  Consider the first inequality. The only binding cases are with $T=\emptyset$
  -- the left side is largest, and the right side is smallest. This
  yields $\sum_{i\in N,j\in S}\mu_{ij}\leq\sum_{i\in N}c_{j}$, recovering
  the capacity constraints in Equation \ref{eq:Cap-IV}. Now consider
  the second inequality; the only binding cases are $T=N$. This yields
  $\sum_{i\in N,j\in O\backslash S}\mu_{ij}\leq N-\sum_{j\in S}m_{j}$,
  recovering Equation \ref{eq:Min-IV}.
\end{proof}
\medskip{}

\begin{proof}[\textbf{Proof of Proposition \ref{prop:rand_mins_equiv}.}]
  Let $\mu\in[0,1]^{N\times O}$ satisfy unit demand (relaxed) and
  capacity constraints. It suffices to show that $\mu$ satisfies condition
  (\ref{eq:Min-III}) if and only if it satisfies condition (\ref{eq:Min-IV}),
  taking capacity and unit demand as given. For all $S\subseteq O_{m}$,
  condition (\ref{eq:Min-IV}) requires
  \begin{align*}
    \sum_{j\in O\backslash S}\mu_{j}                                 & \leq N-\sum_{j\in S}m_{j} &  & \tag{Min-III} \\
    \iff\sum_{j\in O}\mu_{j}-\sum_{j\in S}\mu_{j}+\sum_{j\in S}m_{j} & \leq N                                       \\
    \iff\sum_{j\in O}\mu_{j}+\sum_{j\in S}\left(m_{j}-\mu_{j}\right) & \leq N.
  \end{align*}
  Consider the terms $(m_{j}-\mu_{j})$ and the subset $S\subseteq O_{m}$.
  The above inequality is binding if and only if $\{j:m_{j}-\mu_{j}>0\}\subseteq S\subseteq\{j:m_{j}-\mu_{j}\geq0\}$.
  For all other $S$, this condition is redundant. Thus, we can equivalently
  write this condition as
  \begin{align*}
    \sum_{j\in O}\mu_{j}+\sum_{j\in O_{m}}\max\{m_{j}-\mu_{j},0\}                  & \leq N                    \\
    \iff\sum_{j\in O\setminus O_{m}}\mu_{j}+\sum_{j\in O_{m}}\max\{m_{j},\mu_{j}\} & \leq N                    \\
    \iff\sum_{j\in O}\max\{m_{j},\mu_{j}\}                                         & \leq N. &  & \tag{Min-IV}
  \end{align*}
\end{proof}

\subsection{Proof of Theorem \ref{thm:MPS}: Envy free}

\label{proof:envyfree}
\begin{proof}[\textbf{Proof that MPS is envy free.}]
  It is clear from the definition of our algorithm that it satisfies
  \textbf{common availability}; an object is always available to everyone
  or to no one. Formally, our capacity and minimum conditions can be
  written in terms of the marginals $\mu_{j}$ over agents. For $j\in O$,
  let $t_{j}\in[0,d]$ be the time at which object $j$ becomes unavailable
  during the process of MPS. (By common availability, $t_{j}$ is well
  defined and unique for each $j\in O$.) A formal definition of $t_{j}$
  is contained in the next section.

  Without loss of generality, consider agent 1, and label objects in
  $O$ such that $a\succ_{1}b\succ_{1}\cdots$. Since 1 consumes $a$
  over the time interval $[0,t_{a})$, his allocation $\mu_{a}$ satisfies
  $\mu_{1a}\geq\mu_{ia}$ for any $i\in N$. Now consider $\{a,b\}$.
  Agent 1 consumes from $\{a,b\}$ over the interval $[0,\max\{t_{a},t_{b}\})$,
  so $\mu_{1a}+\mu_{1b}\geq\mu_{ia}+\mu_{ib}$ for all $i\in N$. Iterating
  through all objects in $O$ gives the desired result.
\end{proof}

\subsection{Proof of Theorem \ref{thm:MPS}: Weak strategyproof}

\label{proof:wsp}

Without loss of generality suppose that agent 1 misreports his preferences
as $\succ_{1}'$, and let $\succ'=(\succ_{1}',\succ_{-1})$. Also
without loss of generality, let $a\in O$ be 1's favorite object.
Denote $\mu=MPS(\succ)$ and $\mu'=MPS(\succ')$.
\begin{defn}
  Let agent $i$'s \textbf{eating function} be $e_{i}:[0,1]\rightarrow O$,
  denoting his consumption throughout $MPS(\succ)$. Note that $e_{i}$
  is right-continuous; that is, for all $t\in[0,1)$, $\exists\varepsilon>0$
  such that $e_{i}(s)=e_{i}(t)$ for all $s\in[t,t+\varepsilon)$. Let
  $e=(e_{i})_{i\in N}$ denote an eating profile.

  Let $e_{i}'(\cdot)$ be $i$'s eating function under $MPS(\succ')$.
  Note we may have $e_{i}'(t)\neq e_{i}(t)$ for any $i\in N$, not
  just agent 1.

  Finally, let $n_{j}(t,e)=|\{i\in N:e_{i}(t)=j\}|$.
\end{defn}
\begin{defn}
  For any set of objects $S\subseteq O$, perhaps a singleton, let $n_{S}(t,e)$
  be the number of agents consuming $S$ at time $t$ under eating profile
  $e$. Denote
  \[
    \mu_{S}(t,e)=\int_{0}^{t}n_{S}(s,e)\ ds.
  \]
  to be the amount consumed from $S$ up to time $t$ under $e$.

  Similarly, for a set of agents $A\subseteq N$, let $n_{A,S}(t,e)$
  be the number of agents among $A$ consuming $S$ at time $t$ under
  eating profile $e$. Denote
  \[
    \mu_{A,S}(t,e)=\int_{0}^{t}n_{A,S}(s,e)\ ds.
  \]
\end{defn}
Recall from Proposition \ref{prop:rand_mins_equiv} that there are
two possible constraints that can bind on an object:
\begin{align*}
  \mu_{j}                            & \leq c_{j} \\
  \sum_{j\in O}\max\{m_{j},\mu_{j}\} & \leq N.
\end{align*}

\begin{defn}
  We refer to two conditions on an object $j$, time $t$, and eating
  schedule $e$
  \begin{align}
    \mu_{j}(t,e)                                                                                                     & <c_{j}\label{eq:cap}    \\
    \exists\varepsilon>0\text{ s.t. }\max\{m_{j},\mu_{j}(t,e)+\varepsilon\}+\sum_{k\neq j}\max\{m_{k},\mu_{k}(t,e)\} & \leq N\label{eq:global}
  \end{align}
  If condition (\ref{eq:cap}) fails for $j$ at $(t,e)$, we say that
  $j$ is bound by (\ref{eq:cap}) at $(t,e)$. That is, it is at its
  capacity. Likewise, if condition (\ref{eq:global}) fails for $j$
  at $(t,e)$, we say that $j$ is bound by (\ref{eq:global}) at $(t,e)$.
  We say $j$ is available at $(t,e)$ if it satisfies both (\ref{eq:cap})
  and (\ref{eq:global}) at $(t,e)$.

  Let
  \[
    \tau=\inf\{t:\exists j\in O\text{ s.t. }j\text{ fails }(\ref{eq:global})\text{ at }(t,e)\}.
  \]
  That is, $\tau$ is the time at which $\sum_{j\in O}\max\{m_{j},\mu_{j}\}=N$.
  After $\tau$, note that only objects such that $\mu_{j}(t,e)<m_{j}$
  can be consumed further. Let $\tau'$ be the analogous time under
  the eating profile $e'$.

  Finally, let
  \[
    t_{j}=\inf\{t:j\text{ fails }(\ref{eq:cap})\text{ or }(\ref{eq:global})\text{ at }(t,e)\}.
  \]
  That is, $t_{j}$ is the closing time of object $j$ under $e$; no
  more can be consumed of $j$ after $t_{j}$. Let $t_{j}'$ be the
  analogous time under the eating profile $e'$.
\end{defn}
Suppose that for an eating profile $e$, at some time $t\in[0,1]$
there exists an object $j$ that fails condition (\ref{eq:global}).
That is, $\mu_{j}(t,e)+\sum_{k\neq j}\max\{m_{k},\mu_{k}(t,e)\}=N$.
Note that $t\geq\tau$. Define $S:=\{j\in O:j\text{ fails (\ref{eq:cap}) or (\ref{eq:global}) at }(t,e)\}$.
Note then that $O\backslash S=\{j:\mu_{j}(t,e)<m_{j}\}$. Thus $\mu_{S}(t,e)$
is an upper bound on $\mu_{S}$ at any time and under any eating function,
since we have $\mu_{S}(t,e)=N-\sum_{j\notin S}m_{j}$.
\begin{defn}
  Let $\delta(t)$ be the amount of time from $[0,t)$ which agent 1
  has spent eating differently between the reports:
  \[
    \delta(t)=\int_{0}^{t}1\{e_{1}(s)\neq e_{1}'(s)\}\ ds.
  \]
\end{defn}
\begin{lem}
  \label{lem:sp1} Suppose $t_{a}\geq\tau$ and $\tau'\geq\tau$. Then
  for any object $b\neq a$ we have
  \[
    \mu_{b}(\tau,e')\geq\mu_{b}(\tau,e).
  \]
\end{lem}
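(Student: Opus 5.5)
The plan is to prove the stronger statement that $\mu_{b}(t,e')\geq\mu_{b}(t,e)$ for every $b\neq a$ and every $t\in[0,\tau]$, together with the companion claim that every $b\neq a$ closes weakly earlier under $e'$ than under $e$ on this interval. That is, if $t_{b}\leq t\leq\tau$ then $t_{b}'\leq t_{b}$.

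The starting observation uses the hypotheses $\tau'\geq\tau$ and $t_{a}\geq\tau$. On $[0,\tau)$ neither run has any object failing (\ref{eq:global}), so in both runs the only way an object closes before $\tau$ is by reaching its capacity (\ref{eq:cap}). Moreover, $a$ is open under $e$ throughout $[0,\tau)$, so agent 1 eats $a$ on all of $[0,\tau)$ under $e$. Hence under $e$ agent 1 contributes nothing to any $b\neq a$ before $\tau$. Both runs therefore behave like plain capacity-constrained PS up to $\tau$. Since there are finitely many closing events, I would argue by induction on the ordered closing times under $e$ in $[0,\tau]$, or equivalently by contradiction at the first time the companion claim fails.

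The key step concerns a time $s<\tau$ at which every $k\neq a$ that is closed under $e$ is also closed under $e'$. Take any agent $i\neq1$ with $e_{i}(s)=b\neq a$ and $b$ still open under $e'$. Agent $i$'s choice $e_{i}'(s)$ is his favorite object open under $e'$. Any object he prefers to $b$ is closed under $e$ at $s$. Such an object is not $a$, since $a$ is open under $e$ before $\tau\leq t_{a}$. By the inductive hypothesis it is therefore closed under $e'$ as well. So $e_{i}'(s)=b$.

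It follows that, at every $s$ where $b$ is open under $e'$, the agents eating $b$ under $e$ form a subset of those eating $b$ under $e'$: agent 1 is not among them under $e$, and every other agent eating $b$ under $e$ keeps eating $b$ under $e'$. This gives $n_{b}(s,e')\geq n_{b}(s,e)$ whenever $b$ is open under $e'$. Once $b$ closes under $e'$ before $\tau$, it has hit capacity, so $\mu_{b}(s,e')=c_{b}\geq\mu_{b}(s,e)$. Integrating the rate inequality up to the closing time of $b$ under $e'$, and using the capacity bound afterwards, yields $\mu_{b}(t,e')\geq\mu_{b}(t,e)$ for all $t\leq\tau$. In particular $b$ reaches $c_{b}$ no later under $e'$, so $t_{b}'\leq t_{b}$. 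This propagates the inductive hypothesis past the next closing event, and evaluating at $t=\tau$ gives the lemma.

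The main obstacle is making the induction rigorous at closing events. Simultaneous closures, and closures occurring under $e'$ but not under $e$, are harmless, because they only enlarge the set of objects closed under $e'$. Right-continuity of the eating functions ensures each agent's choice is constant on small intervals, so a first-failure-time argument works. I would phrase it as follows: let $t^{*}=\sup\{t\leq\tau:\text{the claims hold on }[0,t]\}$ and derive a contradiction on $[t^{*},t^{*}+\varepsilon)$ using the rate comparison above.
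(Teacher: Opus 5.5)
Your proposal is correct and is essentially the paper's argument: the paper also reduces to showing that no object closes strictly later under $e'$ than under $e$ before $\tau$ (it takes the earliest-closing element of $\{j:t_j<t_j'\}$ and finds a truthful agent $\ell\neq1$ who deviates to an object closing even earlier, contradicting minimality), which is your forward induction phrased as a minimal counterexample, and it then finishes with the same rate comparison. One small fix is needed: restrict the companion claim to $t_b<\tau$, since an object that closes exactly at $\tau$ via condition (\ref{eq:global}) under $e$ can stay open past $\tau$ under $e'$ when $\tau'>\tau$ (for instance, when agent 1 diverts his eating into an object with a positive minimum); you never use that endpoint case, because the inequality at $t=\tau$ follows from $[0,\tau)$ by continuity.
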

\begin{proof}[Proof of Lemma \ref{lem:sp1}]
  If $\mu_{b}(\tau,e')=c_{b}$, then the claim is immediate. Now let
  $\mu_{b}(\tau,e')<c_{b}$. Toward a contradiction, suppose $\mu_{b}(\tau,e')<\mu_{b}(\tau,e)$
  for some $b\neq a$. Then there must exist some agent $i$ and time
  $t<\tau$ such that $e_{i}(t)=b$ but $e'_{i}(t)=c\neq b$. If not,
  then $n_{b}(s,e)\leq n_{b}(s,e')$ for all $s<\tau$. Then
  \[
    \mu_{b}(\tau,e)=\int_{0}^{\tau}n_{b}(s,e)\ ds\leq\int_{0}^{\tau}n_{b}(s,e')\ ds=\mu_{b}(\tau,e'),
  \]
  a contradiction. Note $i\neq1$ since $e_{1}(s)=a$ for all $s<t_{a}$.

  Since $t<\tau\leq\tau'$ and $\mu_{b}(t,e')\leq\mu_{b}(\tau,e')<c_{b}$,
  object $b$ is available at $(t,e')$. Then since $i$ eats truthfully
  and is consuming $c$ at $(t,e')$, $cP_{i}b$ and $t_{c}\leq t<t'_{c}$.
  Let $S:=\{j\in O:t_{j}<t'_{j}\}$, which is nonempty since $c\in S$.
  Let $d\in S$ be the element of $S$ such that $t_{d}$ is minimal.
  Note that $t_{d}\leq t_{c}<\tau$ so $\mu_{d}(t_{d},e)=c_{d}$. Note
  also $d\neq a$ since $t_{d}<\tau\leq t_{a}$.

  Since $t_{d}<t'_{d}$, there must exist some agent $\ell$ and time
  $r<t_{d}$ such that $e_{\ell}(r)=d$ but $e'_{\ell}(r)=x\neq d$.
  If not, then $n_{d}(s,e)\leq n_{d}(s,e')$ for all $s<t_{d}$. But
  this would imply that
  \begin{align*}
    c_{d} & =\mu_{d}(t_{d},e)                                                     \\
          & =\int_{0}^{t_{d}}n_{d}(s,e)\ ds                                       \\
          & \leq\int_{0}^{t_{d}}n_{d}(s,e')\ ds                                   \\
          & <\int_{0}^{t_{d}}n_{d}(s,e')\ ds+\int_{t_{d}}^{t'_{d}}n_{d}(s,e')\ ds \\
          & =\mu_{d}(t'_{d},e'),
  \end{align*}
  which is not possible. The strict inequality in the fourth line follows
  from the fact that $n_{d}(\cdot,e')$ is non-decreasing.

  Note that $\ell\neq1$, since $e_{1}(s)=a$ for all $s<t_{a}$. Since
  $r<t_{d}<t'_{d}$, object $d$ is available at $(r,e')$. Since $\ell$
  eats truthfully and is consuming $x$ at $(r,e')$, $xP_{\ell}d$
  and $t_{x}\leq r<t'_{x}$. Then $x\in S$ as well. However, $t_{x}\leq r<t_{d}$,
  contradicting that $t_{d}$ was minimal.
\end{proof}
\begin{lem}
  \label{lem:sp2} If $t_{a}>\tau$, then $\tau\geq\tau'$.
\end{lem}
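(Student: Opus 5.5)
Suppose, toward a contradiction, that $t_{a}>\tau$ and $\tau'>\tau$. The plan is to compare the truthful and misreport runs at time $\tau$ and show that the misreport run must already have reached the global bound (Min-IV) by then.

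\textbf{Step 1: apply Lemma \ref{lem:sp1}.} Its hypotheses hold, since $t_{a}>\tau$ and $\tau'>\tau\geq\tau$. So $\mu_{b}(\tau,e')\geq\mu_{b}(\tau,e)$ for every $b\neq a$.

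\textbf{Step 2: control object $a$.} Because $t_{a}>\tau$, no agent eating $a$ under $e$ ever switches away from it before $\tau$. Agent 1 eats $a$ throughout $[0,\tau)$ under $e$. I will rerun the minimal-closing-time argument from the proof of Lemma \ref{lem:sp1}, now applied to the agents $i\neq1$ with $e_{i}(s)=a$. Such an agent who is not eating $a$ under $e'$ at some $s<\tau$ must be eating some $c$ with $t_{c}\leq s<t'_{c}$, and this produces the same contradiction. Hence every $i\neq1$ eating $a$ under $e$ also eats $a$ under $e'$ on $[0,\tau)$. The only possible loss on $a$ therefore comes from agent 1: $\mu_{a}(\tau,e')\geq\mu_{a}(\tau,e)-\delta(\tau)$.

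\textbf{Step 3: compare the global sums.} Total consumption up to $\tau$ is $N\tau$ in both runs, because every agent eats at unit speed. So $\sum_{j}\mu_{j}(\tau,e')=\sum_{j}\mu_{j}(\tau,e)$. Combined with Step 1, whatever agent 1 removes from $a$ reappears, object by object, in some $b\neq a$. I then want to show
\[
\sum_{j}\max\{m_{j},\mu_{j}(\tau,e')\}\geq\sum_{j}\max\{m_{j},\mu_{j}(\tau,e)\}=N .
\]
The case to handle is where mass moves from $a$ to objects $b$ with $\mu_{b}<m_{b}$. Then $\max\{m_{b},\cdot\}$ does not increase while $\max\{m_{a},\cdot\}$ may fall, which would break the inequality.

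\textbf{Step 4 (main obstacle).} I must rule out the case where agent 1's diverted consumption lands only in unmet-minimum objects while $\mu_{a}>m_{a}$. To do this, I will use that the contribution of $a$ is $\max\{m_{a},\mu_{a}\}$ and that at $\tau$ the quantity $\sum_{j}\max\{m_{j},\mu_{j}\}$ equals $N$. I will also use that under $e'$, before $\tau'$, the other agents behave like truthful eaters facing only capacity constraints. So they can only shift their own consumption toward objects that close later under $e'$. These are exactly the objects in the set $S$ of Lemma \ref{lem:sp1}, which the minimal-element argument shows to be empty before $\tau$. Given that, the $e$- and $e'$-profiles of agents $\neq1$ coincide on $[0,\tau)$.

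Step 3 then reduces to agent 1 alone. His deviation moves mass from $a$ into other objects. Each unit moved into an object with $\mu_{b}\geq m_{b}$ keeps the sum unchanged. For the remaining units, I argue via (Min-III) with $S=\{j:\mu_{j}(\tau,e)<m_{j}\}$: since $\mu_{O\setminus S}(\tau,e)=N-\sum_{j\in S}m_{j}$ is already maximal, the sum $\max$-expression for $e'$ at $\tau$ is at least $N$. Feasibility forces equality, so the global bound binds at $\tau$ under $e'$. That gives $\tau'\leq\tau$, contradicting $\tau'>\tau$.

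I expect the delicate point to be this last accounting. There I will have to check carefully that unmet-minimum objects cannot absorb agent 1's diverted consumption without the bound binding.
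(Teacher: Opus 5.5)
Step 1 of your proposal is correct, and it is in fact the only ingredient the proof needs. The proposal still has a genuine gap in Steps 3--4, caused by a missing observation.

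\textbf{The missing observation.} At $\tau$ we have $\sum_j\max\{m_j,\mu_j(\tau,e)\}=N$. Hence every object with $\mu_j(\tau,e)\ge m_j$ fails (\ref{eq:global}) at $(\tau,e)$, since raising its $\mu_j$ by any $\varepsilon>0$ pushes the sum above $N$. Because $t_a>\tau$, object $a$ does not fail, so $\mu_a(\tau,e)<m_a$. The case you call the main obstacle ($\mu_a>m_a$, with agent 1's diverted mass landing in unmet-minimum objects) therefore never arises.

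\textbf{The unjustified step.} Instead of noticing this, you try to neutralize that case by claiming the eating profiles of agents $i\neq 1$ coincide on $[0,\tau)$. Your reason is that these agents can only shift toward objects that close later under $e'$. That claim does not follow. The minimal-element argument of Lemma~\ref{lem:sp1} rules out an agent being \emph{pulled} toward an object $c$ with $t_c<t'_c$. It says nothing about agents being \emph{pushed off} an object $b$ that agent 1's misreport fills earlier. For example, if agent 1 reports $b$ first and $c_b=1$, then $t'_b<t_b$, and another agent who truthfully eats $b$ is displaced under $e'$ before $\tau$. So reducing Step 3 to agent 1 alone is unsupported.

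Your closing sentence has a further problem. Saying that $\mu_{O\setminus S}(\tau,e)$ is already maximal does not show the $e'$-sum is at least $N$. Maximality is an upper bound, whereas you need the lower bound $\mu_{O\setminus S}(\tau,e')\ge\mu_{O\setminus S}(\tau,e)$. Lemma~\ref{lem:sp1} supplies that bound only if $a\notin O\setminus S$, which is exactly the fact you did not establish.

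\textbf{The repair} is short, and it is the paper's proof. Your last paragraph already picks the right set. Let $U=\{j:\mu_j(\tau,e)<m_j\}$. Then $a\in U$, and $\sum_{j\notin U}\mu_j(\tau,e)=N-\sum_{j\in U}m_j$; the objects outside $U$ are exactly those with $t_j\le\tau$. Every $j\notin U$ differs from $a$, so Step 1 gives $\mu_j(\tau,e')\ge\mu_j(\tau,e)$ for all of them. Therefore
$\sum_j\max\{m_j,\mu_j(\tau,e')\}\ge\sum_{j\notin U}\mu_j(\tau,e')+\sum_{j\in U}m_j\ge N$.
So (\ref{eq:global}) binds under $e'$ no later than $\tau$, which gives $\tau'\le\tau$ and contradicts $\tau'>\tau$. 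No information about object $a$, agent 1's diverted consumption, or the other agents' profiles is needed, so Step 2 and the unit-by-unit accounting can be dropped.
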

\begin{proof}[Proof of Lemma \ref{lem:sp2}]
  This is a corollary of Lemma \ref{lem:sp1}. Suppose $t_{a}>\tau$.
  Let $S:=\{j\in O:t_{j}\leq\tau\}$. Note $a\notin S$. By Lemma \ref{lem:sp1},
  $\mu_{j}(\tau,e')\geq\mu_{j}(\tau,e)$ for all $j\in S$. Thus
  \begin{align*}
    \mu_{S}(\tau,e') & \geq\mu_{S}(\tau,e)       \\
                     & =N-\sum_{j\notin S}m_{j}.
  \end{align*}
  Thus condition (\ref{eq:global}) must bind under $e'$ weakly before
  time $\tau$. So $\tau'\leq\tau$ as desired.
\end{proof}
\begin{lem}
  \label{lem:sp3} Denote $\mu_{-1,a}(t,e)=\sum_{i\neq1}\mu_{i,a}(t,e)$.
  Let $t_{a}<t'_{a}$. Then
  \[
    \mu_{-1,a}(t_{a},e)\leq\mu_{-1,a}(t_{a},e').
  \]
\end{lem}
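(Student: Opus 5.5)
Under the truthful profile $e$, agent 1 eats $a$ on all of $[0,t_{a})$, so $\mu_{1,a}(t_{a},e)=t_{a}$ and
\[
\mu_{-1,a}(t_{a},e)=\mu_{a}(t_{a},e)-t_{a}.
\]
Under $e'$, agent 1 eats $a$ on at most a set of measure $t_{a}$ within $[0,t_{a})$. Hence it is enough to compare how the \emph{other} agents eat $a$ before $t_{a}$ in the two runs. The plan is to prove the pointwise inequality
\[
n_{-1,a}(s,e)\leq n_{-1,a}(s,e')\qquad\text{for a.e. } s<t_{a}
\]
by contradiction, reusing the ``minimal closing time'' argument from Lemma \ref{lem:sp1}. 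Integrating over $[0,t_{a})$ then gives the claim.

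\textbf{Step 1: set up the contradiction.} Suppose the pointwise inequality fails. Then there are an agent $i\neq1$ and a time $s<t_{a}$ (in fact a set of positive measure, by right-continuity of the eating functions) such that $e_{i}(s)=a$ but $e'_{i}(s)=c\neq a$. Since $s<t_{a}<t'_{a}$, object $a$ is available at $(s,e')$. Agent $i$ reports truthfully, so eating $c$ at $(s,e')$ while $a$ is open means $c\succ_{i}a$. In the truthful run $i$ eats $a$ at time $s$, so $c$ was already closed: $t_{c}\leq s<t'_{c}$.

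\textbf{Step 2: take a minimal earlier-closing object.} Let
\[
S:=\{j\in O: t_{j}<t'_{j},\ t_{j}<t_{a}\},
\]
which contains $c$, and let $d\in S$ minimize $t_{d}$. Note $d\neq a$.

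There are two ways $d$ can close at $t_{d}$ under $e$.
\begin{enumerate}
\item $d$ closes by capacity. Then, exactly as in Lemma \ref{lem:sp1}, since $t_{d}<t'_{d}$ there must be an agent $\ell$ and a time $r<t_{d}$ with $e_{\ell}(r)=d$ and $e'_{\ell}(r)=x\neq d$. This $\ell\neq1$, because agent 1 eats $a$ under $e$ before $t_{a}$. Truthfulness of $\ell$ forces $x\succ_{\ell}d$ and $t_{x}\leq r<t'_{x}$. So $x\in S$ with $t_{x}<t_{d}$, contradicting minimality.
\item $d$ closes by (\ref{eq:global}), i.e.\ $t_{d}\geq\tau$. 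Then, since $t_{d}<t_{a}$, we have $t_{a}>\tau$, so Lemma \ref{lem:sp2} gives $\tau'\leq\tau$. After $\tau'$ only objects with $\mu_{j}<m_{j}$ remain open. I would apply Lemma \ref{lem:sp1}-style monotonicity to the closed set $\{j:t_{j}\leq\tau\}$ to show that $d$ is also closed under $e'$ by time $t_{d}$, contradicting $t_{d}<t'_{d}$.
\end{enumerate}

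\textbf{Step 3: conclude.} Once the pointwise inequality holds, integrate over $[0,t_{a})$ to obtain $\mu_{-1,a}(t_{a},e)\leq\mu_{-1,a}(t_{a},e')$.

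\textbf{Main obstacle.} The second case of Step 2, where the global minimum constraint rather than capacity closes objects, is the delicate part. There the counting argument (``more eaters under $e$ implies earlier closing'') fails, because under $e'$ the closing is governed by the aggregate quantity $\sum_{j}\max\{m_{j},\mu_{j}\}$ and not by $\mu_{d}$ alone.

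I would first try to establish that case by showing that $d$ already has $\mu_{d}(\tau',e')\geq m_{d}$. Combined with the closing rule after $\tau'$, this gives $t'_{d}\leq\max\{\tau',t_{d}\}$. If $\tau'<\tau$ could make that argument break down, I would split on whether $\tau'<t_{a}$ and argue separately for the interval $[\tau',t_{a})$, where both runs only feed objects below their minimums.
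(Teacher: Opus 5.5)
Your skeleton matches the paper's: a truthful agent $i\neq1$ eats $a$ under $e$ but some $c\succ_i a$ under $e'$; you then take an object $d$ of minimal truthful closing time among those with $t_j<t'_j$, and find a second switching agent $\ell$ who contradicts minimality. Your capacity case is fine. The gap is your case where $d$ closes by (\ref{eq:global}). You leave it unresolved, and the tools you propose for it do not work.

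\begin{itemize}
\item Lemma \ref{lem:sp1} assumes $\tau'\geq\tau$. In this case Lemma \ref{lem:sp2} gives you $\tau'\leq\tau$, so Lemma \ref{lem:sp1} cannot be invoked in general.
\item The intermediate target $\mu_d(\tau',e')\geq m_d$ is the wrong one. If $t_d>\tau$, then already $\mu_d(\tau,e)<m_d$ in the truthful run. Nothing forces $d$ to have met its minimum by $\tau'\leq\tau$ under $e'$.
\end{itemize}

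What is missing is that the switching agent $\ell$ must also be produced in the global case. You get it by running the same counting argument against $m_d$ instead of $c_d$.

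\begin{itemize}
\item Suppose no $\ell$ and $r<t_d$ with $e_\ell(r)=d\neq e'_\ell(r)$ exist. Then $n_d(s,e)\leq n_d(s,e')$ on $[0,t_d)$.
\item Hence $\mu_d(t_d,e')\geq\mu_d(t_d,e)\geq m_d$. The last inequality holds because $d$ fails (\ref{eq:global}) at $(t_d,e)$.
\item Since $t_d\geq\tau\geq\tau'$, the global constraint already binds under $e'$ at time $t_d$. Every object at or above its minimum is therefore closed at that time.
\item So $t'_d\leq t_d$, contradicting $t_d<t'_d$.
\end{itemize}

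Thus $\ell$ exists in both cases. Your minimality argument from the capacity case never used which constraint closed $d$, so it now finishes the proof.

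The paper does essentially this, with the order reversed relative to yours. It first shows the switching agent exists regardless of how the minimal object $c$ closed. To do so it derives the strict inequality $\mu_c(t_c,e)<\mu_c(t'_c,e')$, treating $\mu_c(t_c,e)=0$ separately. In the global case this gives $t'_c\leq\tau'$, hence $\tau'>\tau$, contradicting Lemma \ref{lem:sp2}. Only afterwards does it invoke minimality.
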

\begin{proof}[Proof of Lemma \ref{lem:sp3}]
  Denote $n_{-1,j}(t,e)=|\{i\in N\backslash\{1\}:e_{i}(t)=j\}|$. Toward
  a contradiction, suppose $\mu_{-1,a}(t_{a},e)>\mu_{-1,a}(t_{a},e')$.
  Then there must be some agent $i\neq1$ and time $t<t_{a}$ such that
  $e_{i}(t)=a$ but $e'_{i}(t)=b\neq a$. If not, then $n_{-1,a}(s,e)\leq n_{-1,a}(s,e')$
  for all $s<t_{a}$, and
  \[
    \mu_{-1,a}(t_{a},e)=\int_{0}^{t_{a}}n_{-1,a}(s,e)\ ds\leq\int_{0}^{t_{a}}n_{-1,a}(s,e')\ ds=\mu_{-1,a}(t_{a},e'),
  \]
  a contradiction.

  Since $t<t_{a}<t'_{a}$, $a$ is available at $(t,e')$. Then since
  $i$ eats truthfully, $bP_{i}a$ and $t_{b}\leq t<t'_{b}$. Let $S:=\{j\in O:t_{j}<t'_{j}\}$,
  which is nonempty since $b\in S$. Let $c\in S$ be the element of
  $S$ such that $t_{c}$ is minimal. Note that $t_{c}\leq t_{b}\leq t<t_{a}$
  and $t_{a}<1$. Since $t_{c}<t_{a}$ we know $c\neq a$.

  Since $t_{c}<t'_{c}$, there is some time $r<t_{c}$ and agent $\ell$
  such that $e_{\ell}(r)=c$ but $e'_{\ell}(r)=d\neq c$. If not, then
  $n_{c}(s,e)\leq n_{c}(s,e')$ for all $s<t_{c}$, and
  \begin{align*}
    \mu_{c}(t_{c},e) & =\int_{0}^{t_{c}}n_{c}(s,e)\ ds                                        \\
                     & \leq\int_{0}^{t_{c}}n_{c}(s,e')\ ds                                    \\
                     & \leq\int_{0}^{t_{c}}n_{c}(s,e)\ ds+\int_{t_{c}}^{t'_{c}}n_{c}(s,e)\ ds \\
                     & =\mu_{c}(t'_{c},e').
  \end{align*}
  In fact, we show that the third line is strict. It suffices to show
  that $n_{c}(s,e)\geq1$ for some $s<t_{c}$. Suppose not; then $\mu_{c}(t_{c},e)=0$.
  Since $t_{c}<1$, it must be that $m_{c}=0$ and thus both $t_{c}=\tau$
  and $t'_{c}=\tau'$. Since $\tau=t_{c}<t_{a}$, we know that $\tau<t_{a}$.
  Additionally, $t_{c}<t'_{c}$ gives $\tau<\tau'$. But this contradicts
  Lemma \ref{lem:sp2}. Thus, $\mu_{c}(t_{c},e)<\mu_{c}(t'_{c},e')$.

  If $t_{c}<\tau$, then $\mu_{c}(t_{c},e)=c_{c}$. But then $c_{c}<\mu_{c}(t'_{c},e')$,
  a contradiction. Now suppose $t_{c}\geq\tau$. Since $\mu_{c}(t_{c},e)\geq m_{c}$,
  we know that $\mu_{c}(t'_{c},e')>m_{c}$. Then $t'_{c}\leq\tau'$.
  But then we have $t_{a}>t_{c}\geq\tau$ and $\tau'\geq t'_{c}>t_{c}\geq\tau$.
  But this contradicts Lemma \ref{lem:sp2}. Thus we have shown that
  there is some time $r<t_{c}$ and agent $\ell$ such that $e_{\ell}(r)=c$
  but $e'_{\ell}(r)=d\neq c$.

  Note that $\ell\neq1$ since $t_{c}<t_{a}$ and $e_{1}(s)=a$ for
  all $s<t_{a}$. Since $r<t_{c}<t'_{c}$, $c$ is available at $(r,e')$.
  Then since $\ell$ eats truthfully, $dP_{\ell}c$ and $t_{d}\leq r<t'_{d}$.
  Then $d\in S$ as well. But $t_{d}\leq r<t_{c}$, contradicting that
  $t_{c}$ was minimal.
\end{proof}
\begin{proof}[\textbf{Proof that MPS is weak strategyproof.}]
  The objective will be to show that if $\delta(t_{a})>0$, then $\mu_{1a}'<\mu_{1a}$.
  The same argument can then be repeated for other objects besides $a$.

  Suppose that $\mu_{1a}'\geq\mu_{1a}$. Suppose also $\delta(t_{a})>0$,
  otherwise there is no change. Then $t_{a}'>t_{a}$, otherwise it could
  not be that $\mu_{1a}'\geq\mu_{1a}$. Additionally, $t_{a}<1$, or
  else $\mu_{1a}=1$ and no improvement is possible.

  It suffices to show that $\mu_{1,a}(t_{a}',e')-\mu_{1,a}(t_{a},e')<\delta(t_{a})$
  for the following reason. By definition, $\mu_{1,a}(t_{a},e')=t_{a}-\delta(t_{a})$.
  Therefore, if $\mu_{1,a}(t_{a}',e')-\mu_{1,a}(t_{a},e')<\delta(t_{a})$,
  then
  \begin{align*}
    \mu_{1,a}(t'_{a},e') & =\mu_{1,a}(t_{a},e')+(\mu_{1,a}(t'_{a},e')-\mu_{1,a}(t_{a},e')) \\
                         & <\mu_{1,a}(t_{a},e')+\delta(t_{a})                              \\
                         & =t_{a}-\delta(t_{a})+\delta(t_{a})                              \\
                         & =t_{a}=\mu_{1,a}(t_{a},e),
  \end{align*}
  a contradiction.

  We will repeatedly use the following fact. Note that
  \begin{align*}
    \mu_{a}(t_{a},e)-\mu_{a}(t_{a},e') & =\mu_{-1,a}(t_{a},e)-\mu_{-1,a}(t_{a},e')  \\
                                       & \ +\mu_{1,a}(t_{a},e)-\mu_{1,a}(t_{a},e').
  \end{align*}
  By Lemma \ref{lem:sp3}, $\mu_{-1,a}(t_{a},e)\leq\mu_{-1,a}(t_{a},e')$.
  By definition, $\mu_{1,a}(t_{a},e)-\mu_{1,a}(t_{a},e')=\delta(t)$.
  Thus
  \[
    \mu_{a}(t_{a},e)-\mu_{a}(t_{a},e')\leq\delta(t_{a}).
  \]

  We proceed by cases relating $t_{a}$ and $\tau$.

  \medskip{}
  \textit{Case 1.} $t_{a}<\tau$. \medskip{}

  \noindent Since $t_{a}<\tau$, we have $\mu_{a}(t_{a},e)=c_{a}$.
  Then
  \begin{align*}
    \mu_{a}(t_{a}',e')-\mu_{a}(t_{a},e') & \leq c_{a}-\mu_{a}(t_{a},e')        \\
                                         & =\mu_{a}(t_{a},e)-\mu_{a}(t_{a},e') \\
                                         & \leq\delta(t_{a}).
  \end{align*}
  Since $t_{a}<1$ and $\mu_{a}(t_{a},e)=c_{a}\geq1$, we know $\mu_{-1,a}(t_{a},e)>0$.
  Therefore, Lemma \ref{lem:sp3} implies $\mu_{-1,a}(t_{a},e')>0$.
  That is, at least one agent $i\neq1$ has consumed $a$ before time
  $t_{a}$ under eating profile $e'$. Since agent $i$ will continue
  to consume $a$ until it is closed, and $t'_{a}>t_{a}$, we know that
  $\mu_{-1,a}(t'_{a},e')-\mu_{-1,a}(t_{a},e')>0$. Since $\mu_{a}(t_{a}',e')-\mu_{a}(t_{a},e')\leq\delta(t_{a})$,
  this implies $\mu_{1,a}(t'_{a},e')-\mu_{1,a}(t_{a},e')<\delta(t_{a})$
  as desired.

  \medskip{}
  \textit{Case 2.} $t_{a}>\tau$. \medskip{}

  \noindent By Lemma \ref{lem:sp2}, $\tau'\leq\tau$. In summary, $\tau'\leq\tau<t_{a}<t_{a}'$.
  Since object $a$ closed after time $\tau$ under eating profile $e$
  and after time $\tau'$ under eating profile $e'$, we know $\mu_{a}(t_{a},e)=\mu_{a}(t'_{a},e')=m_{a}\geq1$.
  Then
  \begin{align*}
    \mu_{a}(t'_{a},e')-\mu_{a}(t_{a},e') & =m_{a}-\mu_{a}(t_{a},e')               \\
                                         & \leq\mu_{a}(t_{a},e)-\mu_{a}(t_{a},e') \\
                                         & \leq\delta(t_{a}).
  \end{align*}
  Since $t_{a}<1$ and $\mu_{a}(t_{a},e)=m_{a}\geq1$, we know $\mu_{-1,a}(t_{a},e)>0$.
  The rest of the argument follows Case 1.

  \medskip{}
  \textit{Case 3.} $t_{a}=\tau$. \medskip{}

  \noindent If $\tau'\leq\tau$, then $\tau'\leq\tau=t_{a}<t'_{a}$
  and so $\mu_{a}(t'_{a},e')=m_{a}\geq1$. The remainder of the argument
  is identical to Case 2.

  Instead, suppose $\tau'>\tau$. Let $S:=\{j\in O:t_{j}\leq\tau\}$;
  note that $a\in S$. Then $S\equiv\{j\in O:j\text{ fails (\ref{eq:cap}) or (\ref{eq:global}) at }(\tau,e)\}$,
  so $\mu_{S}(\tau,e)=N-\sum_{j\notin S}m_{j}$ is an upper bound for
  $\mu_{S}(t_{a}',e')$. Then
  \begin{align*}
    \mu_{S}(t'_{a},e')-\mu_{S}(\tau,e') & \leq\mu_{S}(\tau,e)-\mu_{S}(\tau,e').
  \end{align*}
  Recall $t_{a}=\tau$, so $\mu_{a}(t_{a},e)-\mu_{a}(t_{a},e')\leq\delta(t_{a})\Rightarrow\mu_{a}(\tau,e)-\delta(\tau)\leq\mu_{a}(\tau,e')$.
  Combined with the previous inequality, we have
  \begin{align*}
    \mu_{S}(t'_{a},e')-\mu_{S}(\tau,e') & \leq\mu_{S}(\tau,e)-\mu_{S}(\tau,e')                                              \\
                                        & =\mu_{S}(\tau,e)-\mu_{S\backslash\{a\}}(\tau,e')-\mu_{a}(\tau,e')                 \\
                                        & \leq\mu_{S}(\tau,e)-\mu_{S\backslash\{a\}}(\tau,e')-\mu_{a}(\tau,e)+\delta(\tau).
  \end{align*}
  By Lemma \ref{lem:sp1}, $\mu_{S\backslash\{a\}}(\tau,e')\geq\mu_{S\backslash\{a\}}(\tau,e)$,
  so
  \begin{align}
    \mu_{S}(t'_{a},e')-\mu_{S}(\tau,e') & \leq\mu_{S}(\tau,e)-\mu_{S\backslash\{a\}}(\tau,e')-\mu_{a}(\tau,e)+\delta(\tau)\nonumber     \\
                                        & \leq\mu_{S}(\tau,e)-\mu_{S\backslash\{a\}}(\tau,e)-\mu_{a}(\tau,e)+\delta(\tau)\label{eq:3_1} \\
                                        & =\delta(\tau).\nonumber
  \end{align}
  Since $t_{a}=\tau$, we have $\mu_{S}(t'_{a},e')-\mu_{S}(t_{a},e')\leq\delta(t_{a})$.
  In particular, since $a\in S$, $\mu_{a}(t'_{a},e')-\mu_{a}(t_{a},e')\leq\delta(t_{a})$.

  If $\mu_{-1,a}(t_{a},e)>0$, by Lemma \ref{lem:sp3}, we have $\mu_{-1,a}(t_{a},e')>0$.
  Agents consume an object until it closes, so our assumption that $t'_{a}>t_{a}$
  implies $\mu_{-1,a}(t_{a}',e')-\mu_{-1,a}(t_{a},e')>0$. Then since
  \begin{align*}
    \delta(t_{a}) & \geq\mu_{a}(t'_{a},e')-\mu_{a}(t_{a},e')                                                  \\
                  & =(\mu_{-1,a}(t_{a}',e')-\mu_{-1,a}(t_{a},e'))+(\mu_{1,a}(t_{a}',e')-\mu_{1,a}(t_{a},e')),
  \end{align*}
  we have $\mu_{1,a}(t_{a}',e')-\mu_{1,a}(t_{a},e')<\delta(t_{a})$
  as desired.

  Now suppose $\mu_{-1,a}(t_{a},e)=0$. Since $t_{a}=\tau<1$ and only
  agent 1 consumes $a$, we have $\mu_{a}(t_{a},e)<1$. Then $a$ closed
  at $t_{a}=\tau$ having met its minimum, so it must be that $m_{a}=0$.
  Therefore, $t'_{a}=\tau'$ also. Recall that $\mu_{S}(\tau,e)=N-\sum_{j\notin S}m_{j}$.
  Since $0<\mu_{a}(\tau,e)<1$ is not an integer, there is another object
  $b\in S$ such that $\mu_{b}(\tau,e)$ is not an integer. Since $t_{b}\leq\tau$,
  this implies $m_{b}<\mu_{b}(\tau,e)<c_{b}$.

  Suppose $t'_{b}\leq\tau$. Since $\tau'>\tau$, this implies $b$
  must have closed because it reached its capacity; i.e., $\mu_{b}(t'_{b},e')=c_{b}$.
  Then $\mu_{b}(t'_{b},e')=\mu_{b}(\tau,e')=c_{b}$, so $\mu_{b}(\tau,e')>\mu_{b}(\tau,e)$.
  Together with Lemma \ref{lem:sp1}, this implies
  \[
    \mu_{S\backslash\{a\}}(\tau,e')>\mu_{S\backslash\{a\}}(\tau,e).
  \]
  Substituting into Equation \ref{eq:3_1}, we have $\mu_{S}(t'_{a},e')-\mu_{S}(\tau,e')<\delta(\tau)$.
  Again, since $a\in S$ and $t_{a}=\tau$, it follows that $\mu_{a}(t'_{a},e')-\mu_{a}(t_{a},e')<\delta(t_{a})$
  and therefore
  \[
    \mu_{1,a}(t'_{a},e')-\mu_{1,a}(t_{a},e')<\delta(t_{a})
  \]
  as desired.

  Now suppose $t'_{b}>\tau$. By Lemma \ref{lem:sp1}, $\mu_{b}(\tau,e')\geq\mu_{b}(\tau,e)>0$.
  Thus there is an agent consuming $b$ at $(\tau,e')$. Since this
  agent will consume $b$ until $t'_{b}>\tau$ and since $t'_{a}>t_{a}=\tau$,
  we have
  \[
    \mu_{b}(t'_{a},e')-\mu_{b}(\tau,e')>0.
  \]
  By Equation \ref{eq:3_1}, we already have $\mu_{S}(t_{a}',e')-\mu_{S}(\tau,e')\leq\delta(\tau)$.
  Since $b\in S$ and $t_{a}=\tau$, the fact that $\mu_{b}(t'_{a},e')-\mu_{b}(\tau,e')>0$
  implies $\mu_{a}(t_{a}',e')-\mu_{a}(t_{a},e')<\delta(\tau).$ It follows
  that
  \[
    \mu_{1,a}(t'_{a},e')-\mu_{1,a}(t_{a},e')<\delta(t_{a})
  \]
  as desired.

  This completes the proof.
\end{proof}

\section{Relation to generalized probabilistic serial }

\label{sec:BCKM}

We first recount relevant definitions from BCKM13, though we refer
readers to the original paper for full details.
\begin{defn}
  \label{def:BCKM}Given $N\times O$ and a \textbf{constraint set}
  $S\subseteq N\times O$, a \textbf{constraint} is a restriction on
  deterministic allocations of the form
  \begin{align*}
    \underline{q}_{S}\leq\sum_{(i,j)\in S}M_{ij}\leq\bar{q}_{S}
  \end{align*}
  where $\underline{q}_{S},\bar{q}_{S}\in\mathbb{N}$. That is, a constraint
  imposes a floor and ceiling on a set of allocations between objects
  and agents. A \textbf{constraint structure} $\mathcal{H}$ is a set
  of constraint sets. A constraint structure $\mathcal{H}$ is a \textbf{hierarchy}
  if, for any $S,S'\in\mathcal{H}$, we have $S\subseteq S'$, $S'\subseteq S$,
  or $S\cap S'=\emptyset$. A constraint structure $\mathcal{H}$ is
  a \textbf{bihierarchy} if there exist hierarchies $\mathcal{H}_{1}$
  and $\mathcal{H}_{2}$ such that $\mathcal{H}=\mathcal{H}_{1}\cup\mathcal{H}_{2}$
  and $\mathcal{H}_{1}\cap\mathcal{H}_{2}=\emptyset$.
\end{defn}
Notice that the constraints in our model can be written as bihierarchical
constraints:
\begin{itemize}
  \item Capturing unit demand:
        \[
          1\leq\sum_{j\in O}M_{ij}\leq1
        \]
        The corresponding constraint sets are $\{i\}\times O$ for all $i\in N$,
        forming $\mathcal{H}_{1}$.
  \item Capturing minimums and capacities:
        \[
          m_{j}\leq\sum_{i\in N}M_{ij}\leq c_{j}
        \]
        The constraint sets are analogously $N\times\{j\}$ for all $j\in O$,
        forming $\mathcal{H}_{2}$.
\end{itemize}
Therefore, we can apply the implementation results contained in BCKM13.

BCKM13 also generalizes \textcite{BM01}'s Probabilistic Serial mechanism
to a more general environment within bihierarchies, called Generalized
Probabilistic Serial (GPS). GPS accepts constraints of the form
\begin{align}
  1 & \leq\sum_{(i,j)\in S}M_{ij}\leq1\quad\forall S=\{i\}\times O\nonumber                   \\
  0 & \leq\sum_{(i,j)\in S}M_{ij}\leq\bar{q}_{S}\quad\forall S\neq\{i\}\times O\label{eq:GPS}
\end{align}
where each $S\in\mathcal{H}$ and $\mathcal{H}$ is a bihierarchy.
The first line imposes unit demand, and the second allows only ceiling
constraints on remaining sets.

Even though our primitive constraints are bihierarchical, minimums
are not accepted by GPS. Further, even though we are able to re-express
our constraints as maximums over sets of objects in Proposition \ref{prop:rand_mins},
these constraint sets do not form a bihierarchy. These constraints
include ceilings on sets of the form $O\backslash\{j\}$, which in
general will intersect but not have a subset relation.

Of course, this leaves open the possibility that there may be \emph{some}
expression of the primitive constraints as ceilings on constraint
sets that form a bihierarchy. The next proposition shows this is not
the case, and therefore MPS is not a special case of GPS.
\begin{prop}
  \label{proof:bckmdiff} Let $|N|\geq3,|O|\geq3$. The constraints
  in the model (Section \ref{sec:Model}) cannot generally be expressed
  as bihierarchical constraints of the form in Equation \ref{eq:GPS}.
\end{prop}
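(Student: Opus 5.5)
The plan is to prove the statement with one explicit instance, chosen as small as allowed: $N=\{1,2,3\}$ and $O=\{a,b,c\}$, with $m_{a}=m_{b}=1$, $m_{c}=0$ and $c_{a}=c_{b}=c_{c}=2$. This is essentially the worked example from Section 4. Any larger $N$ or $O$ can be handled by padding with objects that have $m_{j}=0$ and suitable capacities, so that the obstruction survives. The allowable deterministic allocations have object counts $(2,1,0)$, $(1,2,0)$ and $(1,1,1)$. By Proposition \ref{prop:rand_mins}, $\text{lcs}(\Delta D)$ is cut out by the row constraints, the capacity constraints, and the three \ref{eq:Min-III} inequalities
\[
\mu_{c}\leq1,\qquad\mu_{a}+\mu_{c}\leq2,\qquad\mu_{b}+\mu_{c}\leq2 .
\]

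The first step is to reduce the question to a statement about facets. Suppose a GPS structure as in Equation \ref{eq:GPS} generated the same set $D$. By \textcite{Edmonds70}, as in the proof of Proposition \ref{prop:deltaD}, total unimodularity of the bihierarchy implies that the polytope it defines has integral vertices, so it equals $\Delta D$. I will then argue that its lower contour set is exactly $\{\mu\geq0:\sum_{j}\mu_{ij}\leq1,\ \sum_{(i,j)\in S}\mu_{ij}\leq\bar{q}_{S}\}$. This is the claim that a sub-allocation satisfying all the ceilings can always be completed to full rows. I plan to prove it with the same circulation argument (Hoffman's theorem) used for Proposition \ref{prop:general_d}, now applied to the network in which a bihierarchy is represented by two nested trees.

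Once that claim is in hand, the essential uniqueness in Proposition \ref{prop:Balb1} applies. Every facet of $\text{lcs}(\Delta D)$ must then appear, up to positive scaling, among the GPS ceiling constraints. Since those constraints are $0/1$ sums over constraint sets, the supports of the three facets must themselves be constraint sets:
\[
S_{1}=N\times\{c\},\qquad S_{2}=N\times\{a,c\},\qquad S_{3}=N\times\{b,c\}.
\]
I will check that each of these inequalities really is facet-defining by exhibiting enough affinely independent points of $\text{lcs}(\Delta D)$ on which it binds. The symmetry between $a$ and $b$ makes this check short.

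The final step is combinatorial. The sets $S_{2}$ and $S_{3}$ intersect in $S_{1}$, and neither contains the other, so they cannot lie in the same hierarchy. The row sets $\{i\}\times O$ each cross both $S_{2}$ and $S_{3}$. So if $S_{2}\in\mathcal{H}_{1}$ and $S_{3}\in\mathcal{H}_{2}$, the row sets can be placed in neither hierarchy, which contradicts the requirement that the unit-demand rows be in $\mathcal{H}$. This contradiction shows that no GPS representation exists.

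The step I expect to be the main obstacle is the completion claim in the first step, namely that the lower contour set of a GPS polytope is described exactly by its ceilings together with the relaxed rows. If that claim fails in general, I will fall back on a direct argument in the small instance. That argument would show that any family of ceilings excluding the infeasible allocations (such as counts $(1,0,2)$, $(0,1,2)$ and $(2,0,1)$) while keeping the three feasible count vectors must contain sets that cross both each other and the rows.
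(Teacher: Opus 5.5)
Your route differs from the paper's, which never uses $\text{lcs}(\Delta D)$ or facets. It works directly with integer allocations and an exchange argument. Your Step 1 and your final crossing step are sound. The gap is Step 2, and it cannot be closed as planned. The claim that the lower contour set of a GPS polytope is cut out by the relaxed rows and the ceilings alone is false for bihierarchies in general.

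Here is a counterexample. Take three agents and objects $a,b$, with $\mathcal{H}_{1}$ the rows and $\mathcal{H}_{2}=\{N\times\{a\},\{(1,a),(2,a)\},N\times\{b\}\}$ with ceilings $2,1,1$. Agents 1 and 2 can jointly take at most one unit of $a$, so every allowable allocation gives agent 3 object $a$. Yet $\mu_{3b}=1$, with all other entries $0$, satisfies every ceiling and every relaxed row. Running Hoffman's theorem as in the proof of Proposition \ref{prop:general_d} does not confirm your claim. It produces extra cut inequalities that mix rows and ceilings (here $\mu_{3b}\leq0$), just as the model's own $\text{lcs}(\Delta D)$ needs (\ref{eq:Min-III}) and not only the primitive bounds.

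Without Step 2 you cannot use Proposition \ref{prop:Balb1} to force the facets of $\text{lcs}(\Delta D)$ to be GPS ceilings. Restricting the claim to GPS structures whose integer points are your $D$ does not help: none exist, so the restricted claim is only vacuously true and using it is circular. Your fallback names the right target but omits the step that does the work.

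There are two ways to finish.

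(i) Drop the lower contour set. Step 1 already shows the GPS system (rows as equalities, box, ceilings) describes $\Delta D$ itself. The affine hull of $\Delta D$ is $\{\sum_{j}\mu_{ij}=1\ \forall i\}$; the point $\mu_{ia}=\mu_{ib}=0.4$, $\mu_{ic}=0.2$ is relatively interior. The inequality $\mu_{a}\geq1$ is facet-defining; the point $\mu_{ia}=1/3$, $\mu_{ib}=4/9$, $\mu_{ic}=2/9$ makes everything else strict. Some inequality of any description must define that facet. A box inequality cannot. A ceiling can only if $\mathbf{1}_{S}=\lambda\mathbf{1}_{N\times\{b,c\}}+\sum_{i}\alpha_{i}\mathbf{1}_{\{i\}\times O}$ with $\lambda>0$, and the $0/1$ entries force $\alpha_{i}=0$ and $\lambda=1$, i.e.\ $S=N\times\{b,c\}$. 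Symmetrically $N\times\{a,c\}\in\mathcal{H}$, and your crossing argument finishes.

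(ii) Use the paper's exchange argument. The infeasible allocation $A$ ($1,2\mapsto a$, $3\mapsto c$) violates some ceiling $S$. Moving any one agent of $A$ to $b$ gives an allowable allocation. That allocation would have value at least $A(S)$ on $S$ if the moved agent's cell were missing from $S$, so every cell of $A$ lies in $S$. Since $1\mapsto a,2\mapsto b,3\mapsto c$ is allowable, $(2,b)\notin S$. The symmetric set $S'$ for $1,2\mapsto b$, $3\mapsto c$ then crosses $S$, and both cross rows. So no bihierarchy containing the rows can contain them.
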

\begin{proof}[Proof of Proposition \ref{proof:bckmdiff}]
  It suffices to show the claim for a particular market. Let $N=\{1,2,3\}$
  and $O=\{o_{1},o_{2},o_{3}\}$, with $c_{j}=3$ for all $j\in O$,
  and $m_{1}=1,m_{2}=1,m_{3}=0$. For a deterministic allocation $M$
  and constraint set $S\subseteq N\times O$, denote $M(S)=\sum_{(i,j)\in S}M_{ij}$.
  Without loss, assume that each $i\times O\in\mathcal{H}_{1}$.

  First, note that
  \[
    A=\begin{bmatrix}1 & 0 & 0 \\
               1 & 0 & 0 \\
               0 & 0 & 1
    \end{bmatrix}\text{ and }B=\begin{bmatrix}0 & 1 & 0 \\
               0 & 1 & 0 \\
               0 & 0 & 1
    \end{bmatrix}
  \]
  are not allowable deterministic allocations. In $A$, $o_{2}$'s minimum
  requirement is not satisfied, and in $B$, $o_{1}$'s minimum requirement
  is not satisfied. Conversely, note that
  \[
    C=\begin{bmatrix}1 & 0 & 0 \\
               0 & 1 & 0 \\
               0 & 0 & 1
    \end{bmatrix}
  \]
  is indeed an allowable deterministic allocation.

  Therefore, we know that there exists some constraint set $S\in\mathcal{H}$
  such that $A(S)>\bar{q}_{S}$ and some constraint set $S'\in\mathcal{H}$
  such that $B(S')>\bar{q}_{S'}$. However, since $C$ is allowable,
  we also know that $C(S)\leq\bar{q}_{S}$ and $C(S')\leq\bar{q}_{S'}$.

  Note that $(1,o_{1})\in S$. If not, then for the allowable deterministic
  allocation
  \[
    A'=\begin{bmatrix}0 & 1 & 0 \\
               1 & 0 & 0 \\
               0 & 0 & 1
    \end{bmatrix}
  \]
  we would have $A'(S)\geq A(S)>\bar{q}_{S}$, a contradiction. By similar
  arguments, it is easy to show that $(2,o_{1}),(3,o_{3})\in S$ and
  that $(1,o_{2}),(2,o_{2}),(3,o_{3})\in S'$. Note that $(3,o_{3})\in S\cap S'$,
  so $S\cap S'\neq\emptyset$.

  But note that $(1,o_{1})\notin S'$. If it were, then $C(S')=B(S')>\bar{q}_{S'}$,
  contradicting that $C$ is allowable. Therefore $S\nsubseteq S'$.
  Also, note that $(2,o_{2})\notin S$. If it were, then $C(S)=A(S)>\bar{q}_{S}$,
  again contradicting that $C$ is allowable. Therefore, $S'\nsubseteq S$.

  Therefore, $S$ and $S'$ cannot be part of the same hierarchy. But
  also, neither $S$ nor $S'$ can be in $\mathcal{H}_{1}$. Thus, $\mathcal{H}$
  cannot be a bihierarchy.
\end{proof}

\end{document}